\newtheorem{lemma}{Lemma}
\newtheorem{theorem}{Theorem}
\newtheorem{corollary}{Corollary}
\newtheorem{claim}{Claim}
\numberwithin{equation}{section}
\numberwithin{table}{section}
\newcommand{\junk}[1]{}
\newcommand{\ol}{\overline}
\renewcommand{\l}{\lambda}
\def\b1{{\bf 1}}
\def\eps{{\epsilon}}
\def\ver{{\phi^V}}
\def\phisw{\phi_{\text{sweep}}}
\def\polylog{\operatorname{polylog}} 
\def\rquo{\mathcal R} 
\def\expe{\mathbb E}
\DeclareMathOperator{\argmax}{argmax}
\DeclareMathOperator{\supp}{supp}
\title{Improved Cheeger's Inequality and Analysis of Local Graph Partitioning using Vertex Expansion and Expansion Profile
}
\author{
Tsz Chiu Kwok\thanks{\'Ecole polytechnique f\'ed\'erale de Lausanne, \protect\url{tckwok0@gmail.com}.}
\and
Lap Chi Lau\thanks{Simons Institute and University of Waterloo, \protect\url{lapchi@uwaterloo.ca}. This material is based upon work supported by the National Science Foundation under Grant No. 1216642.}\\
\and
Yin Tat Lee\thanks{Massachusetts Institute of Technology, \protect\url{yintat@mit.edu}.}\\
}
\date{}
\begin{document}

\begin{titlepage}
\def\thepage{}
\thispagestyle{empty}

\maketitle

\begin{abstract}
\noindent
We prove two generalizations of the Cheeger's inequality. 
The first generalization relates the second eigenvalue to the edge expansion and the vertex expansion of the graph $G$,
\[ \lambda_2 = \Omega( \ver(G) \cdot \phi(G) ),\]
where $\ver(G)$ denotes the robust vertex expansion of $G$ and $\phi(G)$ denotes the edge expansion of $G$.
The second generalization relates the second eigenvalue to the edge expansion and the expansion profile of $G$, for all $k \geq 2$, 
\[ \lambda_2 = \Omega( \frac{1}{k} \cdot \phi_k(G) \cdot \phi(G) ),\]
where $\phi_k(G)$ denotes the $k$-way expansion of $G$.
These show that the spectral partitioning algorithm has better performance guarantees when $\ver(G)$ is large (e.g. planted random instances) or $\phi_k(G)$ is large (instances with few disjoint non-expanding sets).
Both bounds are tight up to a constant factor.

Our approach is based on a method to analyze solutions of Laplacian systems, and this allows us to extend the results to local graph partitioning algorithms.
In particular, we show that our approach can be used to analyze personal pagerank vectors,
and to give a local graph partitioning algorithm for the small-set expansion problem with performance guarantees similar to the generalizations of Cheeger's inequality.
We also present a spectral approach to prove similar results for the truncated random walk algorithm. 
These show that local graph partitioning algorithms almost match the performance of the spectral partitioning algorithm, with the additional advantages that they apply to the small-set expansion problem and their running time could be sublinear.
Our techniques provide common approaches to analyze the spectral partitioning algorithm and local graph partitioning algorithms.
\end{abstract}

\end{titlepage}

\newpage

\section{Introduction}

Let $G = (V,E)$ be a complete weighted graph and $n := |V|$.
For simplicity, we assume that the graph is regular and the total weight on each vertex is one throughout\footnote{By standard arguments, the results can be extended to handle non-regular graphs using the notion of conductance and the normalized Laplacian matrix.}.
Let $w(S,T)$ be the total weight of the edges with one vertex in $S$ and another vertex in $T$.
The edge expansion of a set $S \subseteq V$ and the edge expansion of a graph $G$ are defined as
\[\phi(S) := \frac{w(S,\ol{S})}{|S|}
\quad {\rm and} \quad \phi(G) := \min_{S: |S| \leq |V|/2} \phi(S).\]
Let $L = I - A$ be the Laplacian matrix of $G$ where $I$ and $A$ are the identity and the adjacency matrix of $G$,
with eigenvalues of $L$ being $0 = \lambda_1 \leq \lambda_2 \leq \ldots \leq \lambda_n \leq 2$.
Cheeger's inequality~\cite{cheeger,alon-milman,alon} bounds the edge expansion of $G$ using the second eigenvalue of $L$, 
\[ \frac12 \lambda_2 \leq \phi(G) \leq \sqrt{2 \lambda_2}. \]
It is useful in bounding the edge expansion of a graph and also bounding the mixing time of random walks~\cite{hoory-linial-wigderson}.
The proof of Cheeger's inequality gives an efficient algorithm to find a set with expansion at most $\sqrt{2\lambda_2}$, and we will refer to this algorithm as the spectral partitioning algorithm (also known as the sweep cut algorithm on the second eigenvector).
A recent generalization~\cite{improved-cheeger} of Cheeger's inequality bounds the edge expansion of $G$ using the second and the $k$-th eigenvalues of $L$ for any $k \geq 2$,
\[ \phi(G) = O(k) \frac{\lambda_2}{\sqrt{\lambda_k}}. \]
This provides a better analysis of the spectral partitioning algorithm in practical instances of image segmentation and data clustering.

\subsection{Our results}

We prove two new generalizations of Cheeger's inequality.
These provide better analyses of the spectral partitioning algorithm when some expansion parameters of the graph are large.
We also prove similar bounds for the personal pagerank algorithm and the truncated random walk algorithm.
These give local graph partitioning algorithms for the small-set expansion problem with improved Cheeger's guarantees.
Our techniques provide common approaches to analyze the spectral partitioning algorithm and local graph partitioning algorithms.

\subsubsection{Vertex Expansion}

The first generalization bounds the second eigenvalue of $L$ by the edge expansion and the vertex expansion of $G$.
We define the robust vertex expansion following Kannan, Lov\'asz and Montenegro~\cite{kannan-lovasz-montenegro}.
For $S \subseteq V$, let $N_{1/2}(S) := \min\{ |T|~|~ T \subseteq V-S {\rm~and~} w(S,T) \geq \frac12 w(S,\ol{S})\}$.
Define 
\[\ver(S) := \frac{N_{1/2}(S)}{|S|}
\quad {\rm and} \quad \ver(G) := \min_{S:|S| \leq |V|/2} \ver(S)\]
as the robust vertex expansion\footnote{
Note that the usual definition of vertex expansion, define as $\min_{S: |S| \leq |V|/2} N(S)/|S|$, is too sensitive to edges of tiny weights (e.g. adding a complete graph with tiny edge weight will change $\ver(G)$ to one).
One could replace the constant $1/2$ in the definition of $N_{1/2}(S)$ by other constant say $0.99$ so that the definition of robust vertex expansion is closer to the definition of (ordinary) vertex expansion while we can still obtain similar results.} of $G$. 
Also define
\[\Psi(S) := \phi(S) \cdot \ver(S)
\quad {\rm and} \quad \Psi(G) := \min_{S:|S| \leq |V|/2} \Psi(S)\]
as the minimum product of the edge expansion and the robust vertex expansion.
The following is a generalization of Cheeger's inequality using robust vertex expansion.

\begin{theorem} \label{t:product}
\[\lambda_2 = \Omega(\min \{\Psi(G), \phi(G)\}).\]
\end{theorem}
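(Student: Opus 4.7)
My plan is to prove Theorem \ref{t:product} by extending the classical Cheeger sweep-cut argument with an averaging over pairs of thresholds that brings in the robust vertex expansion. Let $f$ be a second eigenvector of $L$. After shifting $f$ by its median and invoking the edgewise inequality $(a-b)^2\ge(a_+-b_+)^2+(a_--b_-)^2$, I obtain a test function $g\ge 0$ with $|\supp(g)|\le n/2$ and
\[\sum_{ij\in E} w_{ij}\bpar{g(i)-g(j)}^2 \;\le\; \lambda_2 \sum_i g(i)^2.\]
Write $M := \max_i g(i)$ and $S_t := \{i : g(i)\ge t\}$, so every $S_t$ has $|S_t|\le n/2$ and hence $\phi(S_t)\ge\phi(G)$ and $\ver(S_t)\ge\ver(G)$. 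The standard randomized sweep with $T^2$ uniform on $[0,M^2]$ gives, via coarea and Cauchy--Schwarz, the auxiliary bound $\expe[w(\partial S_T)]\le\sqrt{2\lambda_2}\,\expe[|S_T|]$; by itself this yields only Cheeger's $\sqrt{2\lambda_2}$ guarantee, but I keep it as one of two ingredients.

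To bring in the robust vertex expansion, for each $t$ let $U_t\subseteq V\setminus S_t$ realize $N_{1/2}(S_t)$, so $|U_t|\ge\ver(G)\,|S_t|$ and $w(S_t,U_t)\ge\tfrac12 w(\partial S_t)$. I now consider a pair of thresholds $t_1<t_2$ and split the edges $(i,j)$ with $i\in S_{t_2}$, $j\notin S_{t_2}$ into short-drop ones ($j\in S_{t_1}\setminus S_{t_2}$) and long-drop ones ($j\notin S_{t_1}$, contributing at least $(t_2-t_1)^2$ each to the Dirichlet form). The key dichotomy is: either a constant fraction of $w(S_{t_2},U_{t_2})$ sits in the long-drop part---so that the Dirichlet form already captures $w(\partial S_{t_2})\,(t_2-t_1)^2$---or $U_{t_2}$ is essentially contained in $S_{t_1}\setminus S_{t_2}$, which forces $|S_{t_1}|-|S_{t_2}|\gtrsim\ver(G)\,|S_{t_2}|$. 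Averaged against an appropriate joint distribution on $(t_1,t_2)$ and combined with the layer-cake representation of $\|g\|^2$, the two branches together with the Cheeger inequality above deliver a sweep cut $S$ with $\phi(S)\min\{\ver(S),1\}\lesssim\lambda_2$, which is the content of the theorem.

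The main obstacle is this vertex-expansion step. Because $N_{1/2}(S)$ is a minimum over subsets, its value cannot be lower bounded from local edge information, and a direct pointwise contribution argument fails. I expect the correct device to be the integrated two-threshold scheme above, in which the minimality in the definition of $N_{1/2}$ is used only after averaging and one exploits the monotonicity of $U_t$ in $t$ to patch together cuts at different scales. Choosing the joint distribution on $(t_1,t_2)$ so that the short-drop and long-drop terms balance without introducing $\log n$ or $\sqrt{\lambda_2}$ losses is likely to be where the bulk of the technical work lies.
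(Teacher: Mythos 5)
Your proposal takes a genuinely different route from the paper's proof, and while the intuition is sound, the central step is missing and some of the supporting claims need repair.

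The paper does not work with the Dirichlet form or a randomized sweep at all. It works directly with the eigenvalue equation $Lx = \lambda x$: summing the coordinate equations over an initial segment $[1,a]$ gives the \emph{linear} bound $x_a - x_b \le \lambda \sum_{i\le a} x_i / w([1,a],[b,n])$ (Lemma~\ref{l:drop}), with no Cauchy--Schwarz and hence no inherent $\sqrt{\lambda}$ loss. The vertex expansion then enters through a deterministic jumping sequence $m_{i+1}=\lceil m_i(1+\ver(m_i))\rceil$: by the minimality in the definition of $N_{1/2}$, a constant fraction of the boundary weight of $[1,m_i]$ must cross beyond $[1,m_{i+1})$, so the drop bound applies with $w\ge\tfrac12 m_i\phi(m_i)$. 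An induction (Lemma~\ref{l:induction}) then shows the values cannot decay fast enough to change sign by index $n/2$, contradicting orthogonality to the all-ones vector. Your proposal instead goes through $\sum w_{ij}(g(i)-g(j))^2\le\lambda_2\|g\|^2$, keeps the Cauchy--Schwarz-based bound $\phi(S_T)\le\sqrt{2\lambda_2}$ as one ingredient, and hopes to recover the lost square root by averaging a long-drop/short-drop dichotomy over pairs of thresholds. This is possible in principle, but it is a different proof strategy.

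There are two concrete gaps. First, the averaging is the entire content of the argument and is left unfinished; you say yourself that choosing the joint distribution on $(t_1,t_2)$ so the two branches balance ``without introducing $\log n$ or $\sqrt{\lambda_2}$ losses'' is ``where the bulk of the technical work lies.'' That balancing is precisely what the paper achieves by a different mechanism (the linear Drop Lemma plus a multiplicative jumping sequence plus induction), and you have not produced it. Second, two of the supporting claims need fixing: the minimizing set $U_t$ realizing $N_{1/2}(S_t)$ is \emph{not} monotone in $t$ (only $S_t$ is), so that cannot be used as stated; and the dichotomy ``constant fraction of $w(S_{t_2},U_{t_2})$ is long-drop'' vs.~``$U_{t_2}$ essentially contained in $S_{t_1}\setminus S_{t_2}$'' is not complementary, since the short-drop weight can concentrate on few vertices. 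The clean form of the dichotomy, and the one the paper implicitly uses, is: if $|S_{t_1}\setminus S_{t_2}| < N_{1/2}(S_{t_2})$ then by the \emph{minimality} of $N_{1/2}$ the set $S_{t_1}\setminus S_{t_2}$ absorbs less than half of $w(\partial S_{t_2})$, so more than half of $w(\partial S_{t_2})$ is long-drop. If you want to pursue your route, start from that corrected dichotomy; but be aware the paper deliberately sidesteps the Cauchy--Schwarz loss by never introducing the Rayleigh quotient in the first place, and reproducing its bound through a sweep-cut averaging argument is not obviously easier.
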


\begin{corollary} \label{c:vertex}
\[\lambda_2 = \Omega(\ver(G) \cdot \phi(G)).\]
\end{corollary}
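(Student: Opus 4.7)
The plan is to derive the corollary as a short consequence of \autoref{t:product}, whose conclusion $\lambda_2 = \Omega(\min\{\Psi(G), \phi(G)\})$ already packages the substance. It therefore suffices to show that $\min\{\Psi(G), \phi(G)\} = \Omega(\ver(G) \cdot \phi(G))$, and I would carry this out in two small steps.

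First, I would verify that $\ver(G) = O(1)$, so that the $\phi(G)$ term in the minimum costs us at most a constant factor compared with $\ver(G) \cdot \phi(G)$. Taking $S$ to be any balanced cut of size $\lfloor |V|/2 \rfloor$ gives the trivial bound $N_{1/2}(S) \le |V| - |S|$ and hence $\ver(S) \le (|V|-|S|)/|S| \le 1 + 2/(|V|-1)$; minimizing over $S$ yields $\ver(G) \le 2$ for all $|V| \ge 2$.

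Second, I would lower-bound $\Psi(G)$ by $\phi(G) \cdot \ver(G)$ directly from the definitions: for every $S$ with $|S| \le |V|/2$ we have $\phi(S) \ge \phi(G)$ and $\ver(S) \ge \ver(G)$ by the respective definitions of $\phi(G)$ and $\ver(G)$, so $\Psi(S) = \phi(S) \cdot \ver(S) \ge \phi(G) \cdot \ver(G)$; taking the minimum over $S$ preserves this inequality. Combining the two steps,
\[
\min\{\Psi(G), \phi(G)\} \;\ge\; \min\{\phi(G)\cdot\ver(G),\, \phi(G)\} \;=\; \Omega\bigl(\phi(G)\cdot\ver(G)\bigr),
\]
where the last equality uses $\ver(G) = O(1)$. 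Plugging into \autoref{t:product} then yields $\lambda_2 = \Omega(\ver(G) \cdot \phi(G))$.

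There is no substantive obstacle in this corollary: all of the work relating $\lambda_2$ to a product of edge and vertex expansion is encapsulated in \autoref{t:product}, and the only thing to observe is that the refined lower bound $\min\{\Psi(G), \phi(G)\}$ always dominates $\ver(G) \cdot \phi(G)$ up to a constant, which is immediate once one notes the uniform bound $\ver(G) = O(1)$.
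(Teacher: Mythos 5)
Your proof is correct and is the natural derivation the paper leaves implicit: the corollary is stated without a separate argument, and your two observations --- that $\Psi(G) = \min_S \phi(S)\ver(S) \ge \phi(G)\cdot\ver(G)$ because both factors are at least their respective global minima, and that $\ver(G) \le 2$ (so the $\phi(G)$ branch of the minimum in Theorem~\ref{t:product} also dominates $\ver(G)\cdot\phi(G)$ up to a constant) --- are precisely what is needed to pass from $\min\{\Psi(G),\phi(G)\}$ to $\ver(G)\cdot\phi(G)$.
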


Note that $\ver(S) \geq \frac12 \phi(S)$ and so Corollary~\ref{c:vertex} is a generalization of Cheeger's inequality.
Observe that $\ver(S)$ could be much larger than $\phi(S)$ when the edges crossing $S$ spread out.
For example, randomly generated instances such as those in the planted partition model~\cite{boppana,mcsherry} have $\ver(G) = \Omega(1)$, 
and thus Theorem~\ref{t:product} implies that the spectral partitioning algorithm is a constant factor approximation algorithm for those instances\footnote{For example, in a planted $k$-partition instance where there are $k$ subsets of size $n/k$ with probability $p$ having an edge between two vertices in the same subset and probability $q$ having an edge between two vertices in different subsets for $q \ll p$, the improved Cheeger's inequality only proves a $O(k)$-approximation while Theorem~\ref{t:product} proves a $O(1)$-approximation.}.
Another interesting example is the hypercube\footnote{For hypercubes, it is known that the edge expansion is $\Omega(1/\log n)$, the vertex expansion is $\Omega(1/\sqrt{\log n})$~\cite{harper}, and the product of the edge expansion and the vertex expansion is $\Omega(1/\log n)$~\cite{margulis}.  
We believe that the same bounds hold for robust vertex expansion, $\ver(G)=\Omega(1/\sqrt{\log n})$ and $\Psi(G) = \Omega(1/\log n)$ but we don't know of a proof yet.
If that's true, Corollary~\ref{c:vertex} will give a bound of $\Omega(1/\log^{3/2}(n))$ on the second eigenvalue, and Theorem~\ref{t:product} will give the correct bound of $\Omega(1/\log(n))$, while Cheeger's inequality only gives a bound of $\Omega(1/\log^2 n)$.  
}.

\subsubsection{Expansion Profile}

The $\delta$-small-set expansion ($0 < \delta \leq 1/2$) of $G$ 
and the $k$-way expansion ($k \geq 2)$ of $G$ are defined as
\[\phi_{\delta}(G) := \min_{S: |S| \leq \delta |V|} \phi(S)
\quad {\rm and} \quad
\phi_k(G) := \min_{S_1, \ldots, S_k:~S_i \cap S_j = \emptyset~\forall i \neq j~} \max_{1 \leq i \leq k} \phi(S_i).\]
The curve $\phi_{\delta}(G)$ for $0 < \delta \leq 1/2$ is defined by Lov\'asz and Kannan~\cite{lovasz-kannan} and is called the expansion profile of $G$.
Note that $\phi(G) = \phi_{1/2}(G) = \phi_2(G)$. 
The following is a generalization of Cheeger's inequality using $k$-way expansion.

\begin{theorem} \label{t:k-way}
For all $k \geq 2$,
\[\lambda_2 = \Omega( \frac1k \cdot \phi_k(G) \cdot \phi(G) ).\]
\end{theorem}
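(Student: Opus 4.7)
The natural plan is to deduce Theorem~\ref{t:k-way} from Corollary~\ref{c:vertex} by proving the structural inequality
\[
  \ver(G) \;=\; \Omega\bpar{\phi_k(G)/k}.
\]
Combined with Corollary~\ref{c:vertex}, this immediately yields $\lambda_2 = \Omega(\ver(G)\cdot\phi(G)) = \Omega(\phi_k(G)\,\phi(G)/k)$, which is Theorem~\ref{t:k-way}.

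To prove the structural inequality, I argue by contraposition: assuming $\ver(G) < \phi_k(G)/(Ck)$ for a sufficiently large absolute constant $C$, I construct $k$ pairwise disjoint sets $A_1,\ldots,A_k \subseteq V$ each of edge expansion strictly less than $\phi_k(G)$, contradicting the definition of $\phi_k(G)$.  Let $S$ with $|S| \leq n/2$ realize $\ver(G)$, witnessed by $T := N_{1/2}(S)$ satisfying $|T| < (\phi_k(G)/(Ck))|S|$ and $w(S,T) \geq \tfrac{1}{2}w(S,\ol{S})$.  The trivial bound $\ver(S) \geq \tfrac{1}{2}\phi(S)$ already gives $\phi(S) < 2\phi_k(G)/(Ck)$, so $S$ itself has expansion far below $\phi_k(G)$, and can be used as one of the $A_i$'s.

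The remaining $k-1$ sets are obtained by partitioning $S \cup T$ into $k$ consecutive pieces of size roughly $|S|/k$ along a sweep ordering induced by a low-frequency function on $S \cup T$ (for instance a Fiedler vector of the induced subgraph $G[S \cup T]$, or the indicator of $T$ smoothed by a few random-walk steps).  For each piece, the outgoing boundary in $G$ splits into two contributions: (i) edges leaving $S \cup T$, whose total across all pieces is at most $\tfrac{1}{2}w(S,\ol{S}) + |T| = O(\phi_k(G)\,|S|/(Ck))$, which is only a small fraction of the per-piece budget $\phi_k(G)\cdot|S|/k$; and (ii) internal cut edges within $S \cup T$, which we control by averaging over the $k-1$ cut positions along the sweep ordering and then balancing the piece sizes.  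Choosing $C$ large enough makes both contributions strictly smaller than $\phi_k(G)\cdot|A_i|$ for every $i$.

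\textbf{Main obstacle.}  The hard step is controlling the internal cut on a \emph{per-piece} basis rather than merely on average, since a uniformly random partition of $S \cup T$ cuts almost all internal edges.  This is where the particular structure forced by small $\ver(S)$ is essential: many edges of $S$ funnel into the tiny set $T$, which gives $G[S \cup T]$ a genuinely low-frequency component, and the sweep based on that component yields level sets of uniformly small boundary.  Morally, this is a higher-order version of the sweep argument underlying Theorem~\ref{t:product}, and I expect the technical core of the proof to consist of making this reduction quantitatively rigorous, likely through an averaging/pigeonhole argument over a continuous family of cut thresholds combined with a Cheeger-type rounding on the induced subgraph.
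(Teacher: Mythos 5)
Your plan rests on the structural inequality $\ver(G) = \Omega(\phi_k(G)/k)$, and this inequality is false, so the reduction to Corollary~\ref{c:vertex} cannot work. A counterexample: let $G$ consist of two constant-degree expanders $H_1, H_2$, each on $n/2$ vertices, joined by a single bridge edge of small weight (normalized so each vertex has total weight one). Taking $S = H_1$, the entire boundary of $S$ lands on the single vertex $v_2 \in H_2$ at the other end of the bridge, so $N_{1/2}(S) = 1$ and $\ver(G) \le \ver(S) = 2/n$. On the other hand, any three pairwise disjoint sets must include a proper nonempty subset of $H_1$ or of $H_2$, and such a subset has $\Theta(1)$ edge expansion because $H_1,H_2$ are expanders; hence $\phi_3(G) = \Theta(1)$. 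Thus $\ver(G) = \Theta(1/n)$ while $\phi_3(G)/3 = \Theta(1)$, and the claimed inequality fails by an unbounded factor. (Note also that in this example $\lambda_2 = \Theta(\phi(G))$, so Theorem~\ref{t:k-way} gives the correct bound, while $\ver(G)\cdot\phi(G) = \Theta(\phi(G)/n)$ is far too weak, confirming that Theorem~\ref{t:k-way} is genuinely incomparable to, not a consequence of, Corollary~\ref{c:vertex}.)

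The same example exposes exactly where the sketch breaks down. With $S = H_1$ and $T = \{v_2\}$, the subgraph $G[S\cup T]$ is essentially an expander, so it has no low-frequency component and every sweep-cut piece of size $|S|/k$ has $\Theta(|S|/k)$ internal boundary. The guiding intuition in your ``main obstacle'' paragraph --- that small $\ver(S)$ forces many edges of $S$ to funnel into $T$ and therefore endows $G[S\cup T]$ with a usable low-frequency vector --- is not correct: small $\ver(S)$ constrains only how the \emph{boundary} edges $w(S,\ol S)$ distribute across $V\setminus S$, and places no constraint at all on the internal structure of $G[S]$, which can be maximally well-connected. So you cannot, in general, carve $S\cup T$ into $k$ pieces of small expansion. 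The paper's proof takes an entirely different route: it works directly with the sorted second eigenvector, defines a multiplicative jumping sequence, uses the drop lemma to relate successive drops to level-set expansion, uses $\phi_k$ only to bound (via Lemma~\ref{l:jump}) how many jumps can have small level-set expansion, and closes the argument with an induction whose constant $c_i$ is allowed to grow and is controlled by the harmonic-type sum over bad jumps (Claim~\ref{c:induction}). No decomposition of a single small-vertex-expansion set ever appears.
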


\begin{corollary} \label{c:small-set}
For all $\delta \leq 1/2$,
\[\lambda_2 = \Omega( \delta \cdot \phi_{\delta}(G) \cdot \phi(G) ).\]
\end{corollary}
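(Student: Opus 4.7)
The plan is to derive Corollary \ref{c:small-set} as a direct reduction from Theorem \ref{t:k-way}, by choosing an appropriate integer $k$ as a function of $\delta$. Specifically, I would set $k := \lceil 1/\delta \rceil$, which is at least $2$ because $\delta \leq 1/2$, so Theorem \ref{t:k-way} applies.

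The first step is to show that $\phi_k(G) \geq \phi_\delta(G)$ for this choice of $k$. Let $S_1, \ldots, S_k$ be any $k$ pairwise disjoint subsets of $V$. Since they are disjoint, the smallest of them has size at most $|V|/k$, and by the choice $k \geq 1/\delta$ this gives $|S_j| \leq \delta |V|$ for some index $j$. By the definition of $\phi_\delta(G)$, we have $\phi(S_j) \geq \phi_\delta(G)$, and hence $\max_{1 \leq i \leq k} \phi(S_i) \geq \phi_\delta(G)$. Taking the minimum over all such $k$-tuples of disjoint sets yields $\phi_k(G) \geq \phi_\delta(G)$.

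The second step is simply to combine this with Theorem \ref{t:k-way} and the bound $1/k \geq \delta/2$, which holds because $k = \lceil 1/\delta \rceil \leq 1/\delta + 1 \leq 2/\delta$ for $\delta \leq 1$. Chaining these gives
\[
\lambda_2 \;=\; \Omega\!\left(\tfrac{1}{k}\,\phi_k(G)\,\phi(G)\right) \;=\; \Omega\!\left(\delta \,\phi_\delta(G)\, \phi(G)\right),
\]
which is the desired bound.

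There is no real obstacle here once Theorem \ref{t:k-way} is in hand; the corollary is essentially a combinatorial packaging of the $k$-way bound. The only thing to be careful about is to verify the edge case $\delta = 1/2$: there $k = 2$, $\phi_k(G) = \phi(G) = \phi_\delta(G)$, and the corollary correctly recovers Cheeger's inequality $\lambda_2 = \Omega(\phi(G)^2)$, which serves as a sanity check.
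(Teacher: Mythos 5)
Your proof is correct and is precisely the intended reduction: the paper presents Corollary~\ref{c:small-set} without proof because it follows from Theorem~\ref{t:k-way} exactly via the argument you give, namely choosing $k = \lceil 1/\delta \rceil$, noting that among any $k$ disjoint nonempty sets the smallest has size at most $|V|/k \le \delta|V|$ (so $\phi_k(G) \ge \phi_\delta(G)$), and using $1/k \ge \delta/2$. Your sanity check at $\delta = 1/2$ (recovering Cheeger's inequality) is also a correct observation.
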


Both Theorem~\ref{t:product} and Theorem~\ref{t:k-way} are tight up to a constant factor.
Both proofs of Theorem~\ref{t:product} and Theorem~\ref{t:k-way} show that the spectral partitioning algorithm achieves the performance guarantees, i.e. the algorithm would output a set $S$ with $\Psi(S) = O(\lambda_2)$ and $\phi(S) = O(k\lambda_2/\phi_k(G))$ respectively.
These imply that the spectral partitioning algorithm is a $O(1/\ver(G))$-approximation and a $O(k/\phi_k(G))$-approximation for edge expansion.

\subsubsection{Local Partitioning Algorithms for Small-Set Expansion}

Our proof techniques allow us to use the same approach to analyze the local partitioning algorithm using personal pagerank vectors~\cite{andersen-chung-lang}.
Given a parameter $\alpha \in (0,1]$ and a vertex $s$,
the personal pagerank vector $r_{s,\alpha} \in {\mathbb R}^n$ is the unique solution to the equation $r_{s,\alpha} = \alpha \chi_s + (1 - \alpha)Wr_{s,\alpha}$, where $W$ is the transition matrix of the lazy random walks.

\begin{theorem} \label{t:pagerank}
For any (unknown target) set $S \subseteq V$, 
there is a polynomial time randomized algorithm to find a set $S'$ with 
\begin{enumerate}
\item $\phi(S') = O(\phi(S) \log(|S|) / \ver(G))$ and $|S'| = O(|S| \log |S|)$,
\item $\phi(S') = O(k \phi(S) \log(|S|) / \phi_k(G))$ and $|S'| = O(|S| \log |S|),$
\end{enumerate}
by computing $r_{s,\alpha}$ for a random vertex $s \in S$ with $\alpha = O(\phi(S))$ and returning a level set of $r_{s,\alpha}$.
For unweighted $d$-regular graphs, there is a local implementation with running time $O(d |S| \log(|S|) / \phi(S) + |S| \log^2 |S|)$.
\end{theorem}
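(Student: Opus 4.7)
The plan is to view the personal PageRank vector $r_{s,\alpha}$ as the solution of a Laplacian-type linear system and then to reapply the sweep-cut machinery underlying Theorem~\ref{t:product} and Theorem~\ref{t:k-way}. Rearranging the defining equation $r_{s,\alpha} = \alpha \chi_s + (1-\alpha) W r_{s,\alpha}$ gives
\[ \bigl(\alpha I + (1-\alpha) L\bigr)\, r_{s,\alpha} \;=\; \alpha \chi_s, \]
and left-multiplying by $r_{s,\alpha}^T$ together with $\|r_{s,\alpha}\|_1 = 1$ and $r_{s,\alpha} \geq 0$ yields $r_{s,\alpha}^T L\, r_{s,\alpha} = O(\alpha)$. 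So $r_{s,\alpha}$ behaves like a low Rayleigh quotient test vector with ``energy'' $O(\alpha) = O(\phi(S))$ in place of $\lambda_2$, fitting exactly the hypothesis of the arguments that prove the two theorems above.

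Next I would choose the starting vertex following Andersen--Chung--Lang. With $\alpha = \Theta(\phi(S))$, a Markov-style averaging of the random walk escape probability over $s \in S$ shows that a constant fraction of vertices of $S$ are \emph{good}, meaning $r_{s,\alpha}(S) = \Omega(1)$. Drawing $s$ uniformly from $S$ therefore succeeds with constant probability, and after conditioning on this event $r_{s,\alpha}$ is a non-negative vector of Rayleigh quotient $O(\phi(S))$ whose mass is mostly supported on $S$, which is the exact input needed for the spectral machinery of the previous sections.

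The core step is the sweep: for thresholds $t$ let $S_t = \{v : r_{s,\alpha}(v) > t\}$ and restrict attention to the thresholds giving $|S_t| = O(|S| \log|S|)$, the range outside which the mass is negligible by the standard localization argument (it is here that both the support bound $|S'| = O(|S| \log|S|)$ and the extra $\log|S|$ factor in the expansion bound originate, since one must sum over $O(\log|S|)$ dyadic scales of level-set size). Replaying the proof of Theorem~\ref{t:product} (resp.\ Theorem~\ref{t:k-way}) on this truncated vector, with the robust vertex expansion $\ver(S_t)$ (resp.\ the $k$-way expansion of $G$) used exactly as there, yields some $S_t$ with $\phi(S_t) = O(\phi(S) \log|S| / \ver(G))$ (resp.\ $O(k \phi(S) \log|S| / \phi_k(G))$), and we output $S' = S_t$.

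Finally, for the local implementation on unweighted $d$-regular graphs I would invoke the approximate PageRank routine of Andersen--Chung--Lang, which produces a vector within $\ell_1$ distance $\eps$ of $r_{s,\alpha}$ using $O(1/(\alpha \eps))$ push operations; setting $\eps = \Theta(1/(|S|\log|S|))$ and $\alpha = \Theta(\phi(S))$ gives total work $O(d|S|\log|S|/\phi(S))$, with the extra $O(|S| \log^2|S|)$ term accounting for sorting and sweeping over the $O(|S| \log|S|)$ coordinates in the support. The main obstacle is the sweep-cut step: Theorems~\ref{t:product} and~\ref{t:k-way} are proved for the second eigenvector $f$ satisfying the exact identity $Lf = \lambda_2 f$, whereas here one only has the perturbed identity $L\, r_{s,\alpha} = \tfrac{\alpha}{1-\alpha}(\chi_s - r_{s,\alpha})$ with a localized, non-constant right-hand side, and the work lies in showing that this perturbation costs only the $\log|S|$ factor while preserving the vertex-expansion and $k$-way-expansion improvements.
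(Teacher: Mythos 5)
Your architecture matches the paper's (random good starting vertex via the escaping-probability bound, a sweep over level sets of $r_{s,\alpha}$ restricted to the first $O(|S|\log|S|)$ coordinates, and Andersen--Chung's approximate-PageRank routine for the local implementation), but the central step is missing, and you say so yourself: ``the work lies in showing that this perturbation costs only the $\log|S|$ factor.'' That work is the theorem. The missing ingredient is the Andersen--Chung drop lemma for PageRank vectors: after sorting so that $x_1 \ge \dots \ge x_n$ with $x := r_{s,\alpha}$, one has $x_a - x_b \le \alpha / w([1,a],[b,n])$ for all $a < b$. This is the exact analogue of the paper's Lemma~\ref{l:drop}, with $\alpha$ playing the role of $\lambda \sum_{i\le a} x_i$, and it is what lets the jumping-sequence machinery of Theorems~\ref{t:product} and~\ref{t:k-way} be replayed. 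In fact the argument becomes \emph{simpler} than for the eigenvector: since the right-hand side is just $\alpha$ rather than $\lambda \ol{x}_{m_i} m_i$, no inductive control of $\ol{x}_{m_i}/x_{m_i}$ is needed --- one telescopes the drops along the jumping sequence, and if every level set up to size $3|S|\log|S|$ had expansion exceeding the claimed bound, the value $x_{3|S|\log|S|}$ would stay above $\frac{1}{3a\log|S|}$ and the total mass would exceed $1$, contradicting $\|x\|_1 = 1$. (This also pins down where the $\log|S|$ really comes from: the dyadic pigeonhole giving a starting index $a \le |S|$ with $x_a \ge \frac{2}{3a\log|S|}$, not from summing expansion contributions over dyadic scales of level-set size as you suggest.)

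Your proposed substitute --- treating $r_{s,\alpha}$ as a low-Rayleigh-quotient test vector via $r^T(\alpha I + (1-\alpha)L)r = \alpha r_s$ --- does not fill this gap. First, the proofs of Theorems~\ref{t:product} and~\ref{t:k-way} do not take a Rayleigh quotient as input; they use the pointwise identity $Lx=\lambda x$ summed over prefixes, so there is nothing for a Rayleigh-quotient bound to plug into. Second, the bound itself is too weak: $\rquo(r) = \frac{\alpha}{1-\alpha}\bigl(\frac{r_s}{\|r\|_2^2}-1\bigr)$, and since $\|r\|_2^2$ can be as small as $\Theta(1/|S|)$ while $r_s$ need not be, this only gives $\rquo(r) = O(\alpha|S|)$ in general, which is useless. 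The paper does use a Rayleigh-quotient route, but only for the random-walk vectors in Section~\ref{s:walks}, where the power-method identity makes it work; for PageRank the pointwise drop lemma is the correct tool.
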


Theorem~\ref{t:pagerank} implies that the personal pagerank algorithm is a $O(\log(|S|)/\ver(G))$-approximation and a $O(k\log(|S|) /\phi_k(G))$-approximation for the small-set expansion problem where the output set size is bounded within a logarithmic factor of the target set size.

We also present a spectral approach to prove that the local graph partitioning algorithm using truncated random walks has similar performance guarantees as the spectral partitioning algorithm.
Let $p_{s,t} := W^t \chi_s$ be the probability distribution vector after $t$ steps of lazy random walks starting from the vertex $s$.

\begin{theorem} \label{t:walks}
For any (unknown target) set $S \subseteq V$, there is a polynomial time randomized algorithm to find a set $S$ with 
\begin{enumerate}
\item $\phi(S') = O(k \phi(S) / (\eps \phi_k(G)))$ and $|S'| = O(|S|^{1+\eps})$,
\item $\phi(S') = O(k \phi(S) \log(|S|) / \phi_k(G))$ and $|S'| = O(|S|)$,
\end{enumerate}
by computing $p_{s,t}$ for a random vertex $s \in S$ 
with $t = O(\log (|S|)/\phi(S))$ for (1) 
and $t=O(1/\phi(S))$ for (2) and returning a level set of $p_{s,t}$.
For unweighted $d$-regular graphs, there is a local implementation with running time $O(d \epsilon^2 |S|^{1 + \epsilon} \log^2 (|S|) / \phi(S)^3)$.
\end{theorem}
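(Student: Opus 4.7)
The plan is to adapt the spectral framework used for Theorems~\ref{t:product}, \ref{t:k-way} and \ref{t:pagerank} to the truncated random walk vector $p_{s,t} = W^t \chi_s$. Writing the lazy walk as $W = I - L/2$ gives a clean expansion of $p_{s,t}$ in the Laplacian eigenbasis, so its Rayleigh quotient with respect to $L$ governs any Cheeger-type rounding of a sweep cut on $p_{s,t}$. The approach hinges on two ingredients. First, an escape lemma: averaging over a uniformly random seed $s \in S$, the total mass that has leaked out of $S$ after $t$ steps is $O(t \cdot \phi(S))$ in expectation, so for $t$ up to $O(\log(|S|)/\phi(S))$ a constant fraction of seeds yield a walk vector with most of its $\ell_2$ mass still concentrated on $S$. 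Second, a telescoping identity $\|p_{s,i}\|_2^2 - \|p_{s,i+1}\|_2^2 = \Theta(\langle p_{s,i}, L p_{s,i} \rangle)$, which after summing over $i = 0, \ldots, t-1$ and using $\|p_{s,0}\|_2^2 \leq 1$ forces some intermediate step to satisfy $\langle p_{s,i}, L p_{s,i}\rangle / \|p_{s,i}\|_2^2 = O(1/t)$ on the good-seed event.

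With these two ingredients in hand, I would run a sweep cut on the chosen $p_{s,i}$ and apply the product-form Cheeger argument behind Theorem~\ref{t:k-way} to turn the $O(1/t)$ Rayleigh quotient into an expansion bound of the form $O(k/(t \cdot \phi_k(G)))$ up to additional factors. For case~(1), choosing $t = O(\log(|S|)/\phi(S))$ yields Rayleigh quotient $O(\phi(S)/\log|S|)$ and hence $\phi(S') = O(k \phi(S) / (\eps \phi_k(G)))$, while the size bound $|S'| = O(|S|^{1+\eps})$ is obtained by truncating the sweep at threshold roughly $|S|^{-(1+\eps)}$ and using an $\ell_2$-to-$\ell_\infty$ counting argument on $p_{s,t}$ restricted to $\ol{S}$. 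For case~(2), the shorter walk $t = O(1/\phi(S))$ gives Rayleigh quotient of order $\phi(S)$ and tighter support $O(|S|)$, but the sweep cut incurs a $\log|S|$ loss that yields $\phi(S') = O(k \phi(S) \log(|S|) / \phi_k(G))$.

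The hard part will be the Cheeger-type rounding on $p_{s,t}$: unlike the pagerank vector, $p_{s,t}$ does not solve an exact Laplacian system, so the relation between its Dirichlet energy and the sweep-cut expansion must be recovered through the telescoping identity above rather than through a direct fixed-point equation, and the $k$-way product bound must be invoked at the level of a fixed intermediate step $i \leq t$ rather than the final step. A secondary obstacle is the local implementation: to achieve the running time $O(d \eps^2 |S|^{1+\eps} \log^2(|S|)/\phi(S)^3)$, we would zero out entries of $p_{s,i}$ below a small threshold at each step in the Spielman--Teng style, and we have to verify that these truncations accumulate into only a negligible perturbation of the Rayleigh quotient and of the level sets across all $t$ steps, which amounts to an $\ell_1$-error control argument tracked through the walk.
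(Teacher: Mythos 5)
Your high-level architecture (bound the Rayleigh quotient of $p_{s,t}$, show it retains enough mass in $S$, then round a sweep cut) matches the paper's, but both of your key quantitative ingredients break down, and in a way that cannot be patched without switching to the paper's actual tools. First, the escape bound: an additive estimate of the form ``expected mass leaked after $t$ steps is $O(t\phi(S))$'' is vacuous for $t=\Theta(\log|S|/\phi(S))$, since then $t\phi(S)=\Theta(\log|S|)\gg 1$; it is simply false that a constant fraction of seeds retain most of their mass in $S$ at that horizon. The paper instead uses the multiplicative staying-probability bound of Oveis Gharan and Trevisan, $\sum_{v\in S}p_{s,t}(v)\ge\frac{1}{200}(1-\frac{3\phi(S)}{2})^{t}$ for half the seeds, accepts that the retained mass is only $|S|^{-\Theta(\eps)}$ at $t=\eps\log|S|/(6\phi(S))$, and converts this via Cauchy--Schwarz into the spectral sparsity $\|p_{s,t}\|_1^2\le O(|S|^{1+\eps})\|p_{s,t}\|_2^2$ --- this polynomially small mass is exactly what produces the $|S|^{1+\eps}$ support, so it cannot be avoided. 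Second, the telescoping identity does give $\sum_{i<t}\langle p_{s,i},Lp_{s,i}\rangle=O(\|p_{s,0}\|_2^2)=O(1)$ and hence $\langle p_{s,i^*},Lp_{s,i^*}\rangle=O(1/t)$ for some $i^*$, but to conclude $\rquo(p_{s,i^*})=O(1/t)$ you must divide by $\|p_{s,i^*}\|_2^2$, which for a probability vector spread over $S$ is only about $1/|S|$ even when no mass escapes; the telescoping route therefore yields only $\rquo=O(|S|^{1+\eps}/t)$, which is useless. The paper's Lemma~\ref{l:small_rayleigh_quotient} tracks the \emph{relative} decay via the power-mean inequality and gives $\rquo(p_{s,t})\le 2-2\|p_{s,t}\|_2^{1/t}=O(\log(1/\|p_{s,t}\|_2^2)/t)=O(\phi(S)/\eps)$. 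Note also that your claimed Rayleigh quotient $O(\phi(S)/\log|S|)$ is internally inconsistent with your stated conclusion: it would yield $\phi(S')=O(k\phi(S)/(\phi_k\log|S|))$, stronger than the theorem and false in general; the $1/\eps$ in case (1) arises precisely from the tension between walk length and mass retention that the additive escape lemma erases.

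A further missing piece is the conversion from spectral sparsity to genuine small support. Thresholding $p_{s,t}$ at $|S|^{-(1+\eps)}$ controls the size of one level set by Markov, but it does not control the Rayleigh quotient of the thresholded vector; the paper uses the Arora--Barak--Steurer rounding (Lemma~\ref{l:spectral_rounding}) to produce $y=\max(p_{s,t}-\tau,0)$ with $\supp(y)=O(|S|^{1+\eps})$ \emph{and} $\rquo(y)=O(\rquo(p_{s,t}))$ for a suitably chosen $\tau$, and only then applies the improved Cheeger machinery (Theorem~\ref{t:improved}, which holds for arbitrary nonnegative small-support vectors) rather than the eigenvector-specific induction behind Theorem~\ref{t:k-way}, whose proof relies on orthogonality to the all-ones vector and does not apply to $p_{s,t}$. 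Your local-implementation sketch is in the right spirit (Spielman--Teng truncation with accumulated-error control), but the error must be fed back through the spectral-sparsity and Rayleigh-quotient lemmas for the truncated vector, as in Lemmas~\ref{l:approximate_spectrally_sparse} and~\ref{l:truncated-rayleigh}, not through a perturbation argument on the level sets.
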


Theorem~\ref{t:walks} implies that the truncated random walks algorithm is a $O(k/\phi_k(G))$-approximation or a $O(k \log(|S|)/\phi_k(G))$-approximation for the small-set expansion problem, with different tradeoffs of the output set size.

Our results provide improved analyses of local graph partitioning algorithms when the vertex expansion or the $k$-way expansion is large,
and provide theoretical justification of their good empirical performances in applications such as image segmentation and data clustering (see~\cite{zhu} and the references therein).
The results show that the performances of local graph partitioning algorithms almost match that of Theorem~\ref{t:product} and Theorem~\ref{t:k-way} (within at most a $O(\log(|S|))$-factor in the approximation guarantee),
with the additional advantages that they apply to the small-set expansion problem (giving bicriteria approximations for $\phi_{\delta}(G)$) and also that their running time could be sublinear in the graph size (when $d$ and $|S|$ are small enough).

\subsection{Comparisons with Related Work}

\subsubsection{Generalizations of Cheeger's inequality}

There are several recent generalizations of Cheeger's inequality using higher eigenvalues of the Laplacian matrix.
The first generalization by Arora, Barak and Steurer~\cite{arora-barak-steurer} relates higher eigenvalues to small-set expansions:
\[\phi_{O(k^{-1/100})}(G) = O(\sqrt{\lambda_k \log_k n}),\]
and they use it to design a subexponential time algorithm for approximating unique games.
The second generalization by Louis et al.~\cite{louis+2} and Lee et al.~\cite{lee-gharan-trevisan} relates higher eigenvalues to $k$-way expansion (a stronger requirement than small-set expansion):
\begin{equation} \label{e:higher}
\frac12 \lambda_k \leq \phi_k(G) \leq O(\sqrt{\lambda_{2k} \log k}),
\end{equation}
and this justifies the use of higher eigenvalues in $k$-way graph partitioning.
Then there is a generalization by Kwok et al.~\cite{improved-cheeger} relating higher eigenvalues to the ordinary edge expansion:
\begin{equation} \label{e:improved}
\phi(G) \leq O(k) \frac{\lambda_2}{\sqrt{\lambda_k}},
\end{equation}
which shows that the spectral partitioning algorithm performs better in instances with $\lambda_k$ large for a small $k$.

Instead of using higher eigenvalues to give better bounds on expansion parameters, our results use expansion parameters to give better bounds on the second eigenvalue. 
We remark that the techniques developed in~\cite{improved-cheeger} could be used to prove Theorem~\ref{t:k-way} (see Section~\ref{s:appendix} in the Appendix),
but our approach is quite different and could be used to prove Theorem~\ref{t:product} and to extend Theorem~\ref{t:k-way} to analyze personal pagerank vectors. 
We also note that our proof of Theorem~\ref{t:k-way} can be used to prove (\ref{e:improved}) using a graph powering trick as described in~\cite{powers} (see Section~\ref{s:appendix} in the Appendix).

\subsubsection{Local Graph Partitioning Algorithms}

Local graph partitioning algorithms are useful in finding a small non-expanding set in a large graph, as their running times are only weakly dependent on the graph size and could be sublinear time.
All known algorithms are based on some random walks related processes.
The first local graph partitioning algorithm is a truncated random walk algorithm by Spielman and Teng~\cite{spielman-teng}, which returns a set $S'$ with $\phi(S') = O(\sqrt{\phi(S) \log^3 n})$ with work-to-volume ratio $O(\polylog(n)/\phi^2(S))$.
The second algorithm is a personal pagerank algorithm by Andersen, Chung and Lang~\cite{andersen-chung-lang}, which returns a set $S'$ with $\phi(S') = O(\sqrt{\phi(S) \log(|S|)})$ with work-to-volume ratio $O(\polylog(n)/\phi(S))$.
The evolving set process is used by Andersen and Peres~\cite{andersen-peres} to further improved the work-to-volume ratio to $O(\polylog(n)/\sqrt{\phi(S)})$ while having the same performance guarantee as in~\cite{andersen-chung-lang}.
Using a better analysis of the escaping probability of random walks, Oveis Gharan and Trevisan~\cite{gharan-trevisan} (see also~\cite{kwok-lau}) showed that the $\sqrt{\log(|S|)}$ factor in the performance ratio can be removed, thereby almost matching the guarantee of Cheeger's inequality.
They combined this with the evolving set process to find a set $S'$ with $\phi(S') = O(\sqrt{\phi(S)/\eps})$, $|S'| = O(|S|^{1+\eps})$ and work-to-volume ratio $O(
|S|^{\eps} \polylog(n) / \sqrt{\phi})$.

Our contribution is to show that the performance of some simple local graph partitioning algorithms (truncated random walks, personal pagerank) almost match that of the improved Cheeger's inequalities.
These provide the first analyses showing that random walk based algorithms perform better when $\ver(G)$, $\phi_k(G)$ or $\lambda_k(G)$ is large, 
with similar performances to the spectral partitioning algorithm while having additional features.
We note that Zhu et al.~\cite{zhu} gave a better analysis of the personal pagerank algorithm when the internal expansion of the target set is large; our results are related but incomparable.

\subsubsection{Analysis of Mixing Time}

The notion of expansion profile was introduced by Lov\'asz and Kannan~\cite{lovasz-kannan} in the study of mixing times of random walks.
They proved that the mixing time is upper bounded by $\int \frac{dx}{x \Phi(x)^2}$ where $\Phi(x) = \min_{0 \leq \delta \leq x} \phi_{\delta}$, which is a better bound on the mixing time when the average conductance is large (e.g. small sets expand well in geometric graphs).

Our work is inspired by their paper and some subsequent work~\cite{kannan-lovasz-montenegro, morris-peres}, both in the proof techniques (will be discussed in the next subsection) and in the definitions.
The robust vertex expansion and its expansion profile are studied in~\cite{kannan-lovasz-montenegro}, where better bounds on the mixing time are proved in a similar form to the average conductance bound above.
In particular, it implies the mixing time is bounded by $O(\log(n)/\Psi(G))$, and thus $\lambda_2 \geq \Omega(\Psi(G) / \log(n))$.
We note that Morris and Peres also proved a lower bound on the second eigenvalue (Theorem~15 in~\cite{morris-peres}) using a parameter related to vertex expansion, but their definition is incomparable to ours.

Our contribution is to directly bound the second eigenvalue (not the mixing time) using the expansion parameters and our bounds are independent of $n$.
Also, the bound that we prove using $\phi_k$ is considerably stronger.
Using $\phi_{\delta}$, the average conductance bound only gives $1/\lambda_2 \leq O(\log(\delta n)/\phi_{\delta}^2 + \log(1/\delta)/\phi^2)$, not improving on Cheeger's inequality even when $\phi_{\delta} = \Omega(1)$ for constant $\delta$, while Corollary~\ref{c:small-set} gives a $O(1)$-approximation when $\phi_{\delta} = \Omega(1)$ for constant $\delta$.

\subsection{Technical Overview}

The proofs are inspired by the work of Lov\'asz and Kannan~\cite{lovasz-kannan}.
We observe that their method is useful in analyzing the solution to a Laplacian system ($Lx=b$), and can be extended to study both the second eigenvectors ($Lx=\lambda x$) and the personal pagerank vectors.

The high-level approach is to look at the solution vector $x \in {\mathbb R}^n$ with $x_1 \geq x_2 \geq \ldots \geq x_n$, and relates the (slow) decrease of $x_i$ to the (large) expansion of the level sets in this vector.
Similar to~\cite{lovasz-kannan}, we define a jumping sequence of indices $1=m_0, m_1, m_2, \ldots$ such that $x_{m_{i}} - x_{m_{i+1}}$ is inversely proportional to the expansion of the level set $[1,m_i]$ (see Lemma~\ref{l:drop}).
Using the Laplacian equation of the second eigenvector,
we use an inductive argument to show that if the expansion of all level sets is $\Omega(\sqrt{\lambda_2})$, 
then the values of $x_i$ decrease slowly enough such that $x_{n/2} > 0$ (see Lemma~\ref{l:induction}), contradicting that $x$ is orthogonal to the all-one vector.
We remark that this gives a new and quite different proof of Cheeger's inequality (e.g.~without using the Cauchy-Schwarz inequality).
To prove Theorem~\ref{t:product}, we use the robust vertex expansion to argue that each jump can be made longer ($m_{i+1} - m_{i}$ made larger) and this gives the improved bound. 
To prove Theorem~\ref{t:k-way}, we argue that given an ordering of the vertices, if $\phi_k$ is large, then there are only a small number of indices in the jumping sequence whose corresponding level sets $[1,m_i]$ are of small expansion (see Lemma~\ref{l:jump}), and then we modify the induction hypothesis to obtain the result (see Lemma~\ref{l:induction2}).
The inductive arguments and the use of $\phi_k$ in arguing about expansions of level sets are the new elements in the proofs that improve upon the average conductance bound of Lov\'asz and Kannan.

The previous analyses of both the truncated random walk algorithm~\cite{spielman-teng} and the personal pagerank algorithm~\cite{andersen-chung-lang} are based on the combinatorial technique introduced by Lov\'asz and Simonovits~\cite{lovasz-simonovits} in analyzing the mixing time of random walks.
This technique is quite different from the analysis of spectral partitioning algorithms.
It requires to consider the random walk vectors for many different time steps, and it is difficult to incorporate the notions of $\phi_k$ or $\lambda_k$ in the analyses as the ordering and level sets are changing in each time step\footnote{We still don't know how to do a better analysis for the evolving set process because of this difficulty.}.

Our techniques provide two approaches to lift the analysis of the spectral graph partitioning algorithm for local graph partitioning algorithms, bringing closer the analyses of these two types of algorithms.
For the personal pagerank algorithm, we use the Lov\'asz-Kannan approach to directly analyze the vector so that we can use $\phi_k$ to reason about the level sets (Lemma~\ref{l:jump}).
We note that this approach is considered by Andersen and Chung to give a simplified proof of the personal pagerank algorithm~\cite{andersen-chung}, and we will reuse some of their lemmas to obtain Theorem~\ref{t:pagerank}.
For the truncated random walk algorithm, we use the spectral approach of Arora-Barak-Steurer~\cite{arora-barak-steurer} to directly obtain a vector with small Rayleigh quotient and small support, so that the improved Cheeger's inequalities can be applied to obtain results for approximating small-set expansions\footnote{We thank David Steurer for suggesting this spectral approach.}.

Finally, we remark that this approach can be applied to analyze the solutions to other Laplacian systems.
Consider the following algorithm for approximating edge expansion.
For an unknown target set $S$, pick a random vertex $s \in S$, inject $n$ units of current to $s$ and extracts one unit of current from every vertex in the graph, sort the vertices by the voltages\footnote{Or equivalently, sort the vertices based on the expected hitting time to $s$.}, and output the level set with the smallest expansion among all level sets of size up to $n/2$.
Our approach implies that this algorithm always outputs a set $S'$ with $\phi(S') = O(\sqrt{\phi(S) \log n})$.
We believe that this approach draws more connections to the mixing time literature and will find further applications.

\section{Spectral Partitioning}

Let $\lambda := \lambda_2$ and $x$ be a second eigenvector such that $Lx = \lambda x$.
Sort the vertices so that $x_1 \geq x_2 \geq \ldots \geq x_n$.

\subsection{Vertex Expansion}

The proof of Theorem~\ref{t:product} consists of two steps.
The first step is to prove the drop lemma and then define a jumping sequence to apply the lemma.
The second step is to use an inductive argument to derive a contradiction if the expansion of all level sets are large.

\subsubsection{Drop Lemma and Jumping Sequence}

The following lemma bounds the decrease of the values in $x$ to the expansion of the level sets of $x$.
Recall that $w(S,T)$ denotes the total weight of the edges with one vertex in $S$ and another vertex in $T$.

\begin{lemma}[Drop Lemma] \label{l:drop} 
For $1 \leq a < b \leq n$, we have
\[x_a - x_b \leq \frac{\lambda \sum_{i=1}^a x_i}{w([1,a],[b,n])}.\]
\end{lemma}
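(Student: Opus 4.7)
The approach is to exploit the vertex-wise form of the eigenvalue equation $Lx=\lambda x$ and then sum it over the top $a$ coordinates. Since $L = I - A$ with $A$ the (weighted) adjacency matrix and the vertex weights normalized to one, the equation at vertex $i$ reads
\[
\sum_{j} w(i,j)(x_i - x_j) \;=\; \lambda x_i.
\]
Summing over $i \in [1,a]$, the contributions from pairs $(i,j)$ with both endpoints in $[1,a]$ cancel by antisymmetry, leaving only pairs crossing the cut $[1,a]$ versus $[a+1,n]$:
\[
\sum_{i \in [1,a]}\sum_{j \notin [1,a]} w(i,j)(x_i - x_j) \;=\; \lambda \sum_{i=1}^{a} x_i.
\]

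The key observation is monotonicity from the sorted order. For every crossing pair, $i \le a < j$ so $x_i - x_j \ge 0$; hence every summand on the left is nonnegative, and I can lower-bound the sum by restricting $j$ to lie in the farther set $[b,n]$. On that restricted range, $x_i \ge x_a$ and $x_j \le x_b$, so $x_i - x_j \ge x_a - x_b$. This yields
\[
(x_a - x_b)\cdot w([1,a],[b,n]) \;\le\; \sum_{i\in[1,a]}\sum_{j\in[b,n]} w(i,j)(x_i-x_j) \;\le\; \lambda \sum_{i=1}^{a} x_i,
\]
and dividing by $w([1,a],[b,n])$ gives the claim.

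For the inequality to be meaningful the right-hand side must be nonnegative, so I would remark at the end that $\sum_{i=1}^{a}x_i \ge 0$ for every $a$: the eigenvector $x$ is orthogonal to $\mathbf{1}$, so $\sum_{i=1}^{n} x_i = 0$, and since $x_1 \ge \cdots \ge x_n$, the partial average of the top $a$ coordinates is at least the overall average zero.

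There is no real obstacle here; the main subtlety is just the bookkeeping of which terms survive after cancellation and verifying that restricting from $j \notin [1,a]$ to $j \in [b,n]$ is a valid lower bound (which uses only the sign $x_i - x_j \ge 0$ of the discarded terms, guaranteed by the sorting). The argument uses neither Cauchy–Schwarz nor any sweep-cut analysis, which is consistent with the paper's remark in the Technical Overview that their approach gives a new proof of Cheeger-type statements.
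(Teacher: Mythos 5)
Your proof is correct and follows essentially the same route as the paper's: sum the vertex-wise eigenvalue equation over $[1,a]$, cancel the internal pairs by symmetry of $w$, and lower-bound the surviving crossing terms by restricting to edges into $[b,n]$, using the sorted order both to discard the $(a,b)$-range terms (they are nonnegative) and to bound each remaining term by $(x_a-x_b)w_{ij}$. Your closing remark that $\sum_{i=1}^a x_i \ge 0$ is a harmless addition the paper defers to its application of the lemma.
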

\begin{proof}
For each $i$, 
\[x_i - \sum_{j} w_{ij} x_j = \lambda x_i.\]
Sum this equation for $1 \leq i \leq a$, we have
\[\sum_{i=1}^a \sum_j x_i w_{ij} - \sum_{i=1}^a \sum_j x_j w_{ij} = \lambda \sum_{i=1}^a x_i.\]
Since $w_{ij} = w_{ji}$, this can be simplified to
\[\sum_{i \leq a} \sum_{j > a} x_i w_{ij} - \sum_{i \leq a} \sum_{j > a} x_j w_{ij} = \lambda \sum_{i=1}^a x_i.\]
Consider the edges from the set $[1,a]$ to the set $[b,n]$.
Each edge contributes $(x_i - x_j) w_{ij}$ to the left hand side,
which is at least $(x_a - x_b) w_{ij}$.
Therefore, we have
\[w([1,a],[b,n]) \cdot (x_a - x_b) \leq \lambda \sum_{i=1}^a x_i
\quad {\rm and~thus} \quad
x_a - x_b \leq \frac{\lambda \sum_{i=1}^a x_i}{w([1,a],[b,n])}.\]
\end{proof}

We define a jumping sequence of indices to apply the drop lemma.
Let $m_0=1$ and
\[m_{i+1} = \lceil m_i(1 + \ver(m_i)) \rceil,\]
where $\ver(m_i), \phi(m_i), \Psi(m_i)$ are shorthands for $\ver([1,m_i]), \phi([1,m_i]), \Psi([1,m_i])$ respectively.
Then, for $m_i \leq n/2$, by the definition of $\ver(m_i)$, we have
\[w([1,m_i],[m_{i+1},n])\geq \frac12 m_i \cdot \phi(m_i).\]
Putting it in the above inequality with $a=m_i$ and $b=m_{i+1}$, it follows that
\begin{equation} \label{e:jump}
x_{m_{i}} - x_{m_{i+1}} \leq \frac{2 \lambda \sum_{i=1}^{m_i} x_i}{m_i \cdot \phi(m_i)}
= \frac{2 \lambda \ol{x}_{m_i}}{\phi(m_i)}, \quad {\rm where} \quad 
\ol{x}_l := \frac{1}{l} \sum_{i=1}^l x_i.
\end{equation}
Note that $\ol{x}_l$ is non-increasing over $l$.

\subsubsection{Induction}

We will prove the following lemma by induction.

\begin{lemma} \label{l:induction}
If $\Psi(m_i) \geq 32\lambda$ and $\phi(m_i) \geq 32\lambda$ for all $m_i \leq n/2$,
then $\ol{x}_{m_{i+1}} \leq 2x_{m_{i+1}}$ for all $m_i \leq n/2$.
\end{lemma}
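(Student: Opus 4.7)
I would prove the lemma by induction on $i$. The base case $i=0$ is trivial: since $m_0 = 1$, we have $\ol{x}_{m_0} = x_1 = x_{m_0}$, so $\ol{x}_{m_0} \le 2 x_{m_0}$ holds.

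For the inductive step, assume $\ol{x}_{m_i} \le 2 x_{m_i}$. Since $x_j \le x_{m_i}$ for every $j > m_i$, the partial sum splits as
\[ m_{i+1}\, \ol{x}_{m_{i+1}} \;=\; m_i\, \ol{x}_{m_i} + \sum_{j=m_i+1}^{m_{i+1}} x_j \;\le\; 2m_i x_{m_i} + (m_{i+1} - m_i)\, x_{m_i} \;=\; (m_i + m_{i+1})\, x_{m_i}, \]
so after dividing by $m_{i+1}$ the desired bound $\ol{x}_{m_{i+1}} \le 2 x_{m_{i+1}}$ is algebraically equivalent to
\[ x_{m_i} - x_{m_{i+1}} \;\le\; \frac{m_{i+1} - m_i}{2 m_{i+1}}\; x_{m_i}. \]
Meanwhile, equation~(\ref{e:jump}) combined with the inductive hypothesis gives $x_{m_i} - x_{m_{i+1}} \le 4\lambda\, x_{m_i}/\phi(m_i)$, so it is enough to verify $\frac{4\lambda}{\phi(m_i)} \le \frac{m_{i+1} - m_i}{2 m_{i+1}}$. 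The definition of the jumping sequence ($m_{i+1} \ge m_i(1+\ver(m_i))$) yields $\frac{m_{i+1}-m_i}{m_{i+1}} \ge \frac{\ver(m_i)}{1+\ver(m_i)}$.

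The step is then finished by a dichotomy on $\ver(m_i)$. If $\ver(m_i) \ge 1$, then $\frac{m_{i+1}-m_i}{2m_{i+1}} \ge \tfrac14$ and the hypothesis $\phi(m_i) \ge 32\lambda$ forces $4\lambda/\phi(m_i) \le \tfrac18$, which suffices. If $\ver(m_i) < 1$, then $\frac{m_{i+1}-m_i}{2m_{i+1}} \ge \ver(m_i)/4$ and the hypothesis $\Psi(m_i) = \phi(m_i)\ver(m_i) \ge 32\lambda$ forces $4\lambda/\phi(m_i) \le \ver(m_i)/8$, which also suffices. Either case closes the induction.

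The main obstacle I expect is really just arranging the case split so that each of the two hypotheses is used exactly where it is needed: $\phi(m_i) \ge 32\lambda$ in the large-$\ver$ regime (where the jumping sequence doubles the index each time), and $\Psi(m_i) \ge 32\lambda$ in the small-$\ver$ regime (where the jump is short and we need the product $\phi \cdot \ver$ to beat $\lambda$). A minor bookkeeping point is ensuring $x_{m_i} \ge 0$ so that the inequality $\ol{x}_{m_i} \le 2 x_{m_i}$ propagates correctly through the arithmetic; this is automatic because $\ol{x}_k$ is non-increasing in $k$ with $\ol{x}_n = 0$ (by orthogonality of $x$ to the all-ones vector), so $\ol{x}_{m_i} \ge 0$ and hence the inductive hypothesis itself forces $x_{m_i} \ge \tfrac12 \ol{x}_{m_i} \ge 0$.
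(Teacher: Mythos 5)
Your proof is correct and follows essentially the same route as the paper: the same inductive bound on the partial sum $\sum_{j\le m_{i+1}} x_j \le (m_{i+1}+m_i)x_{m_i}$, the same use of equation~(\ref{e:jump}) with the inductive hypothesis, and a dichotomy on $\ver(m_i)\ge 1$ versus $\ver(m_i)<1$ that is exactly the case split hidden inside the paper's Claim~\ref{c:gao} (there phrased as $1/(h\varphi)$ versus $1/\varphi$). Your explicit check that $x_{m_i}\ge 0$ (needed to multiply the drop bound through) is a detail the paper leaves implicit, and the only cosmetic difference is that the paper keeps the constant as a parameter $c$ so the computation can be reused for Lemma~\ref{l:induction2}.
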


First we see how it implies Theorem~\ref{t:product}.
Let $m_j$ be the first term in the jumping sequence such that $m_j > n/2$.
Note that the assumptions of Lemma~\ref{l:induction} would imply that $x_j \geq \frac12 \ol{x}_j> 0$, where the last inequality follows because $\sum_{i=1}^n x_i = 0$ (as the second eigenvector is orthogonal to the all-one vector)
and so all partial sums are positive.
But this implies that $x_i > 0$ for all $1 \leq i \leq n/2$, and
applying the same argument to $-x$ will give us a contradiction.
Therefore, the assumptions of Lemma~\ref{l:induction} must not hold, and thus there is an $m_i \leq n/2$ with $\Psi(m_i) \leq 32\lambda$ or $\phi(m_i) \leq 32 \lambda$, proving Theorem~\ref{t:product}.

Now we proceed to prove Lemma~\ref{l:induction}.
It is clear that the inequality holds for $m_0$.
Assume that $\ol{x}_{m_i} \leq cx_{m_i}$ where $c=2$,\footnote{The variable $c$ is used so that we can reuse the calculation here for the proof of Theorem~\ref{t:k-way}.}
we would like to prove that $\ol{x}_{m_{i+1}} \leq cx_{m_{i+1}}$.
Note that
\[\sum_{i=1}^{m_{i+1}} x_i 
= \sum_{i=1}^{m_i} x_i + \sum_{i=m_{i}+1}^{m_{i+1}} x_i
\leq m_i \ol{x}_{m_i} + (m_{i+1} - m_i)x_{m_i} 
\leq x_{m_i}(m_{i+1} + (c-1)m_i). 
\]
Dividing both sides of this inequality by $m_{i+1}$, we have
\begin{eqnarray*}
\ol{x}_{m_{i+1}} & \leq & 
x_{m_i}(1 + (c-1)\frac{m_i}{m_{i+1}}) 
\leq x_{m_i}(1 + \frac{(c-1)}{1+\ver(m_i)}) \\
& \leq & x_{m_{i+1}} (\frac{c+\ver(m_i)}{1 + \ver(m_i)}) (\frac{\phi(m_i)}{\phi(m_i) - 2 \lambda c}) 
\leq c x_{m_{i+1}},
\end{eqnarray*}
where the second inequality follows from the definition of $m_{i+1}$,
the third inequality is by (\ref{e:jump}), and 
the last inequality follows from the following claim by plugging in $\ver(m_i)$ for $h$ and $\phi(m_i)$ for $\varphi$.
Note that the conditions of Claim~\ref{c:gao} follows from the assumptions of Lemma~\ref{l:induction}, and this completes the proof.

\begin{claim} \label{c:gao}
If $2 \leq c \leq 4$, $32\lambda \leq h\varphi$ and $32\lambda \leq \varphi$, then we have
\[(\frac{c+h}{1 + h}) (\frac{\varphi}{\varphi - 2 \lambda c}) \leq c.\]
\end{claim}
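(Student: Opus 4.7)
The plan is to clear denominators and reduce the claim to an equivalent linear inequality in $\varphi$. Multiplying both sides of the desired inequality by $(1+h)(\varphi - 2\lambda c)$ (which is positive since $c \leq 4$ and $\varphi \geq 32\lambda$ give $\varphi - 2\lambda c \geq \varphi - 8\lambda \geq 3\varphi/4 > 0$), the claim becomes
\[(c+h)\varphi \leq c(1+h)(\varphi - 2\lambda c).\]
Expanding the right-hand side and cancelling the common $c\varphi$ term, this simplifies to
\[h(c-1)\varphi \geq 2\lambda c^2 (1+h),\]
or equivalently $\varphi \geq \dfrac{2\lambda c^2(1+h)}{h(c-1)}$.

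Now I would bound the right-hand side using the two hypotheses. For $c \in [2,4]$ the quantity $c^2/(c-1)$ is maximized at $c=4$ with value $16/3$, so in particular $\frac{2c^2}{c-1} \leq 12$. Splitting the factor $(1+h)/h = 1 + 1/h$ gives
\[\frac{2\lambda c^2(1+h)}{h(c-1)} \leq 12\lambda + \frac{12\lambda}{h}.\]
The hypothesis $\varphi \geq 32\lambda$ yields $12\lambda \leq \tfrac{3}{8}\varphi$, and the hypothesis $h\varphi \geq 32\lambda$ yields $12\lambda/h \leq \tfrac{3}{8}\varphi$. Adding these, $12\lambda + 12\lambda/h \leq \tfrac{3}{4}\varphi \leq \varphi$, which is exactly the required bound.

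There is no real obstacle here — the claim is a concrete inequality, and once the rearrangement is carried out the constants $32$ in the hypotheses are chosen comfortably enough that each of the two offending terms can absorbed by $\varphi$ separately. The only minor care needed is checking positivity of $\varphi - 2\lambda c$ before multiplying through, which is immediate from $\varphi \geq 32\lambda$ and $c \leq 4$.
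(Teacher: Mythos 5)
Your proof is correct and follows essentially the same path as the paper's: after clearing denominators (with the same positivity check on $\varphi - 2\lambda c$), both arguments reduce the claim to the linear inequality $(c-1)h\varphi \geq 2\lambda c^2(1+h)$, bound the $c$-dependent factor using $c\in[2,4]$, and then handle the $1+1/h$ split with the two hypotheses on $h\varphi$ and $\varphi$. The only cosmetic difference is that you add the two resulting bounds directly to get $\tfrac34\varphi \leq \varphi$, whereas the paper does a short two-case comparison — same idea, either way.
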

\begin{proof}
The conclusion to check is
\[c \geq (\frac{c+h}{1 + h}) (\frac{\varphi}{\varphi - 2 \lambda c})
= (c - \frac{(c-1)h}{1+h})(1 + \frac{2\lambda c}{\varphi - 2\lambda c}),
\]
which is equivalent to 
\[0 > \frac{-(c-1)h}{1 + h} + \frac{2\lambda c^2}{\varphi - 2\lambda c} - \frac{(c-1) h (2\lambda c)}{(1+h)(\varphi - 2 \lambda c)}.\]
Since $\varphi \geq 32 \lambda > 2c\lambda$, 
this is equivalent to
\[0 > -(c-1)h(\varphi - 2\lambda c) + 2\lambda c^2 (1 + h) - (c-1) h (2 \lambda c), \]
which can be simplified to
\[\frac{c-1}{c^2} > \frac{2 \lambda (1 + h)}{h\varphi} = 2\lambda(\frac{1}{h\varphi} + \frac{1}{\varphi}).\]
Since $2 \leq c \leq 4$, the left hand side is at least $1/8$.
We consider two cases.
The first case is when $1/(h\varphi) \geq 1/\varphi$,
and so the right hand side is at most $4\lambda/(h\varphi)$. 
We have $1/8 \geq 4\lambda/(h\varphi)$, as long as $h\varphi \geq 32 \lambda$,
which is satisfied by our assumption.
The second case is when $1/(h \varphi) \leq 1/\varphi$,
and so the right hand side is at most $4\lambda/\varphi$.
We have $1/8 \geq 4\lambda/\varphi$, as long as $\varphi \geq 32 \lambda$,
which is also satisfied by our assumption.
\end{proof}

\subsection{Proof of Theorem~\ref{t:k-way}}

We follow the same approach to prove Theorem~\ref{t:k-way}.
The additional arguments are in Lemma~\ref{l:jump} to bound the number of terms in the jumping sequence with small expansion using $\phi_k$ and in Claim~\ref{c:induction} to control the inductive bound dynamically.

For Theorem~\ref{t:k-way},
we define the jumping sequence as follows.
Let $m_0=1$ and
\[
m_{i+1} = \lceil m_i(1 + \frac12 \phi(m_i)) \rceil.
\]
Then, for $m_i \leq n/2$, we have
\[w([1,m_i],[m_{i+1},n])\geq m_i \phi(m_i) - (m_{i+1} - m_i - 1) \geq \frac12 m_i \cdot \phi(m_i),\]
so that equation (\ref{e:jump}) still holds after applying the drop lemma.

\subsubsection{$k$-way Expansion}

The assumption on $\phi_k$ allows us to bound the number of terms in the jumping sequence with small expansion.
We note that the following lemma can be applied to any ordering of vertices (not just for second eigenvector), and it will be applied to personal pagerank vectors later.

\begin{lemma} \label{l:jump}
For any $\theta < \phi_k/4$,
there are at most $16k/\phi_k$ terms $m_i$ in the jumping sequence with $\theta \leq \phi(m_i) \leq 2\theta$. 
\end{lemma}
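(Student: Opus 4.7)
The plan is a proof by contradiction. Assume strictly more than $16k/\phi_k$ jumping-sequence terms $m_i$ satisfy $\theta \leq \phi(m_i) \leq 2\theta$; I will exhibit $k$ pairwise disjoint vertex subsets each of expansion strictly less than $\phi_k$, contradicting the minimality in the definition of $\phi_k(G)$. Listing the marked terms as $m_{i_1} < \cdots < m_{i_t}$, the jump rule together with $\phi(m_{i_s}) \geq \theta$ yields
\[ m_{i_{s+1}} \;\geq\; m_{i_s + 1} \;\geq\; m_{i_s}\bigl(1 + \tfrac{\theta}{2}\bigr), \]
so the marked values grow at least geometrically along the sequence.

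Next, for any pair of marked indices $s < s'$, define the annulus $B_{s,s'} := \{m_{i_s}+1,\ldots,m_{i_{s'}}\}$. Every edge leaving $B_{s,s'}$ must cross either the cut $[1,m_{i_s}]$ or the cut $[1,m_{i_{s'}}]$, so the upper bound $\phi(m_{i_s}),\phi(m_{i_{s'}}) \leq 2\theta$ yields an edge boundary of at most $2\theta(m_{i_s}+m_{i_{s'}})$. Dividing by $|B_{s,s'}| = m_{i_{s'}} - m_{i_s}$ gives
\[ \phi(B_{s,s'}) \;\leq\; \frac{2\theta(m_{i_s}+m_{i_{s'}})}{m_{i_{s'}}-m_{i_s}}, \]
which falls strictly below $\phi_k$ whenever $m_{i_{s'}}/m_{i_s}$ exceeds $\rho := (\phi_k+2\theta)/(\phi_k-2\theta)$; the hypothesis $\theta < \phi_k/4$ makes $\rho < 3$. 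I would then greedily extract indices $s_0 < s_1 < \cdots < s_k$ from the $t$ marked indices with $m_{i_{s_{j+1}}}/m_{i_{s_j}} \geq \rho$ at every step. By the geometric growth established above, this is achievable provided one advances $L := \lceil \log \rho / \log(1+\theta/2)\rceil$ marked indices per step, so $t > kL$ suffices. Routine logarithmic estimates (expanding $\log((1+u)/(1-u))$ with $u = 2\theta/\phi_k < 1/2$ in the numerator and a first-order lower bound such as $\log(1+\theta/2) \geq 3\theta/8$ for $\theta \leq 1$ in the denominator) give $L \leq 16/\phi_k$, so the assumption $t > 16k/\phi_k$ does produce the required $k+1$ indices. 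The associated $k$ annuli $B_{s_0,s_1},\ldots,B_{s_{k-1},s_k}$ are pairwise disjoint and each have expansion strictly less than $\phi_k$, contradicting the definition of $\phi_k(G)$.

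The main technical subtlety is the uniform bound $L \leq 16/\phi_k$ across the full admissible range $0 < \theta < \phi_k/4$: the numerator $\log \rho$ blows up as $\theta \to \phi_k/4$ while the denominator $\log(1+\theta/2)$ vanishes as $\theta \to 0$, so both extremes must be handled simultaneously. This is essentially routine, but the specific numerical constant $16$ is somewhat delicate and relies on the sharper series bounds above (a naive $\log(1+x) \leq x$ paired with $\log(1+x) \geq x/2$ only delivers $L = O(1/\phi_k)$ with a slightly worse constant). The strict inequality $\theta < \phi_k/4$ is essential because $\rho$ is unbounded at $\theta = \phi_k/4$.
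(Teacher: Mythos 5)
Your proof is correct and takes essentially the same route as the paper: both argue by contradiction, both use the geometric growth of the jumping sequence, and both construct $k$ pairwise disjoint annuli of the form $\{m+1,\ldots,m'\}$ whose expansion is small because it is controlled by the expansions of the two bounding level sets $[1,m]$ and $[1,m']$. The only difference is in how the marked indices are selected: the paper simply takes every $\ceil{16/\phi_k}$-th marked term and verifies directly that $\phi([y_i,y_{i+1}]) \le 2\theta + \phi_k/2 < \phi_k$ via Bernoulli's inequality, which avoids the delicate logarithmic bookkeeping around $L = \ceil{\log\rho/\log(1+\theta/2)}$ that you flag as the main technical subtlety (and in which you should also tidy up the $\ge \rho$ versus $> \rho$ inconsistency in the greedy extraction to preserve the strict inequality $\phi(B) < \phi_k$ needed for the contradiction).
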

\begin{proof}
Suppose by contradiction that there are at least $16k/\phi_k$ such terms.
Let $y_0$ be the first such term and let $y_{i}$ be the $(16i/\phi_k)$-th such term.
We claim that the sets
$\{[1,y_0], [y_0,y_1], \ldots, [y_{k-1},y_k]\}$ are all of expansion less than $\phi_k$, contradicting the definition of $\phi_k$.
Note that 
\[y_{i+1} \geq y_i (1 + \frac{\theta}{2})^{\frac{16}{\phi_k}} \geq y_i (1 + \frac{8 \theta}{\phi_k}),
\quad {\rm and~thus} \quad
y_{i+1} - y_i \geq \frac{8 \theta y_i}{\phi_k}.\]
The expansion of the set $[y_i,y_{i+1}]$ is
\begin{eqnarray*}
\phi([y_i,y_{i+1}]) & = & \frac{ w([y_i,y_{i+1}], [1,y_i] \cup [y_{i+1},n]) }{y_{i+1} - y_i} \\
& \leq & \frac{w([1,y_i],\overline{[1,y_i]}) + w([1,y_{i+1}],\overline{[1,y_{i+1}]})}{y_{i+1} - y_i}\\
& \leq & \frac{ 2\theta y_{i+1} + 2\theta y_i }{y_{i+1} - y_i}\\
& = & 2 \theta (1 + \frac{2y_{i}}{y_{i+1}-y_i}).
\end{eqnarray*}
Using the lower bound on $y_{i+1} - y_i$,
we have 
\[\phi([y_i,y_{i+1}]) \leq 2\theta + \frac{\phi_k}{2} < \phi_k,\]
where the last inequality is by our assumption that $\theta < \phi_k/4$. 
\end{proof}

\subsubsection{Induction}

In the following, we assume that $\phi_k^2 \geq 1024 \lambda$, 
as otherwise Theorem~\ref{t:k-way} holds trivially.
We will prove the following lemma by induction.

\begin{lemma} \label{l:induction2}
If $\phi_k^2 \geq 1024 \lambda$ and $\phi(m_i) \geq 256k\lambda/\phi_k$ for all $m_i \leq n/2$,
then $\ol{x}_{m_{i+1}} \leq 4x_{m_{i+1}}$ for all $m_i \leq n/2$.
\end{lemma}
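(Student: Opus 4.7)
The plan is to follow the inductive framework of Lemma~\ref{l:induction} but with $c=4$, while handling the case where $\phi(m_i)$ is too small for a direct single-step Cheeger-style argument. Define $c_i := \ol{x}_{m_i}/x_{m_i}$ (so $c_0 = 1$) and prove by induction on $i$ that $c_{i+1} \leq 4$ whenever $m_i \leq n/2$. The new jumping rule $m_{i+1} = \lceil m_i(1 + \phi(m_i)/2) \rceil$ still yields $w([1,m_i],[m_{i+1},n]) \geq \tfrac{1}{2} m_i \phi(m_i)$ and hence equation~(\ref{e:jump}) via the drop lemma. Running the same telescoping calculation as in Lemma~\ref{l:induction} with $\phi_i/2$ playing the role that $\ver(m_i)$ played in the jump factor, and applying the inductive hypothesis $\ol{x}_{m_i} \leq c_i x_{m_i}$, yields the recursion
\begin{equation*}
c_{i+1} \;\leq\; \frac{\phi_i (2 c_i + \phi_i)}{(2 + \phi_i)(\phi_i - 2\lambda c_i)}, \qquad \phi_i := \phi(m_i).
\end{equation*}

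The second step is a dynamic per-step estimate, which I would package as Claim~\ref{c:induction}. Under the inductive hypothesis $c_i \leq 4$ and the hypotheses $\phi_k^2 \geq 1024\lambda$, $\phi_i \geq 256k\lambda/\phi_k$, I would split on the size of $\phi_i$: (a) if $\phi_i \geq \phi_k/4$ (``good'' step), then $c_{i+1} \leq c_i$, by the analog of Claim~\ref{c:gao} with $h = \phi_i/2$, since $\phi_k \geq 32\sqrt{\lambda}$ makes $\phi_i \cdot (\phi_i/2) \geq 32\lambda$ and $\phi_i \geq 32\lambda$ automatic; (b) if $\phi_i < \phi_k/4$ (``bad'' step), then $c_{i+1} \leq c_i \bigl(1 + C\lambda/\phi_i\bigr)$ for an absolute constant $C$. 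Part (b) is obtained by rewriting the recursion's numerator as $-\phi_i^2(c_i-1) + 2\lambda c_i^2(2+\phi_i)$, discarding the negative part, and using $\phi_i \geq 256 k \lambda / \phi_k \gg \lambda c_i$ (which follows from $k \geq 2$ and $\phi_k \leq O(1)$) so that $\phi_i - 2\lambda c_i \geq \phi_i/2$.

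Finally, aggregate the bad-step increments through Lemma~\ref{l:jump}. Every bad $\phi_j$ lies in $[256k\lambda/\phi_k,\ \phi_k/4)$; bin them as $\mathcal{B}_\theta := \{j : \phi_j \in [\theta, 2\theta]\}$ with $\theta$ ranging over powers of two, so Lemma~\ref{l:jump} gives $|\mathcal{B}_\theta| \leq 16k/\phi_k$. Summing, with $\theta_0 := 256k\lambda/\phi_k$,
\begin{equation*}
\sum_{\text{bad } j \leq i} \frac{C\lambda}{\phi_j} \;\leq\; \sum_{\theta} \frac{16k}{\phi_k}\cdot\frac{C\lambda}{\theta} \;\leq\; \frac{16k}{\phi_k}\cdot\frac{2 C \lambda}{\theta_0} \;=\; O(1),
\end{equation*}
where the geometric series in $\theta$ collapses to $O(1/\theta_0)$. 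Combining (a), (b), and the above bound yields $c_{i+1} \leq \prod_{\text{bad } j \leq i} (1 + C\lambda/\phi_j) \leq \exp(O(1)) \leq 4$, provided the constant $256$ in the hypothesis is chosen so that the $O(1)$ total is at most $\ln 4$.

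The main obstacle is constant bookkeeping. A one-shot Cheeger-style single-step bound needs $\phi_i \gtrsim \sqrt{\lambda}$, but the hypothesis $\phi_i \geq 256k\lambda/\phi_k$ does not guarantee this when $\phi_k$ is close to $1$; the discrepancy is exactly what Lemma~\ref{l:jump} is designed to cover, by multiplicatively charging the loss to only $O(k/\phi_k)$ indices per bin. Verifying that the constant $256$ in the hypothesis actually suffices to push the accumulated exponential factor below $4$ is a matter of carefully tracking absolute constants in Claim~\ref{c:induction} and in the geometric sum above.
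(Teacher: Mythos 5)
Your proposal follows the paper's strategy closely: split jumping steps into good ($\phi_i \geq \phi_k/4$) and bad ($\phi_i < \phi_k/4$), invoke Claim~\ref{c:gao} to control good steps, and use Lemma~\ref{l:jump} to bin bad steps by $\phi_i$ and sum the resulting increments. The recursion you derive, and the ``discard the negative part'' manipulation, recover exactly $c_{i+1} - c_i \leq \frac{2\lambda c_i^2}{\phi_i - 2\lambda c_i}$, which is the same as the paper's update $c_{i+1} = c_i/(1 - \eps_i c_i)$ with $\eps_i = 2\lambda/\phi_i$.

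The genuine gap is in the aggregation. After further weakening to $c_{i+1} \leq c_i(1 + C\lambda/\phi_i)$ with $C$ absorbing a factor of $c_i \leq 4$ (so $C$ is around $16$), you bound $c_\infty \leq c_0 \exp\bpar{\sum C\lambda/\phi_j}$. Lemma~\ref{l:jump} with $\theta_0 = 256k\lambda/\phi_k$ gives $\sum_j \lambda/\phi_j \leq 1/8$, so $\sum_j C\lambda/\phi_j \leq 2$ and $c_\infty \leq e^2 \approx 7.4 > 4$: the argument does not close with the lemma's stated constant $256$. You flag this yourself (``provided the constant $256$ is chosen so the total is at most $\ln 4$''), but the constant is fixed by the statement, so as written the proof falls short. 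The paper avoids the loss by not linearizing: it keeps the exact update $c_{i+1} = c_i/(1-\eps_i c_i)$ and proves the closed form $c_i = c_0/(1 - c_0 \sum_{j<i} \eps_j)$ (Claim~\ref{c:induction}), which with $c_0 = 2$ and $\sum \eps_j \leq 1/4$ gives $c_\infty \leq 4$ exactly. A secondary issue is your choice $c_0 = 1$: Claim~\ref{c:gao} requires $c \geq 2$, so your good-step claim ``$c_{i+1} \leq c_i$'' is not justified when $c_i < 2$; the paper sidesteps this by taking $c_0 = 2$ as the initial \emph{upper bound} (valid since $\ol{x}_{m_0} = x_{m_0}$), making the bound sequence non-decreasing and always at least $2$. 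With the non-linearized update and $c_0 = 2$, your proof becomes the paper's.
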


As argued before, the assumptions of Lemma~\ref{l:induction2} would imply that $x_i > 0$ for all $1 \leq i \leq n/2$, leading to a contradiction.
So, the assumptions of Lemma~\ref{l:induction2} must not hold, and thus there is an $m_i$ with $\phi(m_i) \leq 256k\lambda/\phi_k$, proving Theorem~\ref{t:k-way}.

To prove Lemma~\ref{l:induction2},
we will prove by induction that $\ol{x}_{m_i} \leq c_i x_{m_i}$
where initially $c_0=2$ and 
\begin{equation*}
c_{i+1} = \left\{
	\begin{array}{rl}
		c_i & \text{if } \phi(m_i) \geq \phi_k / 4,\\
		c_i/(1- \eps_i c_i) & \text{if } \phi(m_i) < \phi_k / 4, \text{where } \eps_i = 2\lambda/\phi(m_i).
	\end{array} \right.
\end{equation*}
We first assume this induction step and show that $c_{\infty} \leq 4$ using Lemma~\ref{l:jump}.
Then we will verify the induction step.

\begin{claim} \label{c:induction}
$c_{\infty} \leq 4.$
\end{claim}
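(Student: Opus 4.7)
The plan is to invert the recursion so that it becomes additive, and then bound the resulting sum using the dyadic structure provided by Lemma~\ref{l:jump}. Observe that $c_i$ only changes at ``bad'' indices $i$ where $\phi(m_i) < \phi_k/4$, and for those indices the update $c_{i+1} = c_i/(1-\eps_i c_i)$ is equivalent, after reciprocation, to the clean telescoping identity
\[ \frac{1}{c_{i+1}} = \frac{1}{c_i} - \eps_i. \]
Summing over all steps, and using $1/c_0 = 1/2$, gives
\[ \frac{1}{c_\infty} = \frac{1}{2} - \sum_{i:\, \phi(m_i) < \phi_k/4} \eps_i. \]
Thus the claim $c_\infty \le 4$ reduces to proving
\[ \sum_{i:\, \phi(m_i) < \phi_k/4} \eps_i \le \frac{1}{4}. \]

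To prove this bound, I would partition the bad indices into dyadic buckets based on the value of $\phi(m_i)$: for $j \ge 1$, let $B_j := \{\, i : \phi(m_i) \in [\phi_k/2^{j+2},\, \phi_k/2^{j+1})\,\}$. Applying Lemma~\ref{l:jump} with $\theta = \phi_k/2^{j+2} < \phi_k/4$ yields $|B_j| \le 16k/\phi_k$, and on $B_j$ the definition of $\eps_i$ gives $\eps_i = 2\lambda/\phi(m_i) \le 2^{j+3}\lambda/\phi_k$. Moreover, the hypothesis $\phi(m_i) \ge 256k\lambda/\phi_k$ of Lemma~\ref{l:induction2} forces $B_j$ to be empty unless $2^{j+1} < \phi_k^2/(256k\lambda)$, which caps the largest contributing scale $J$. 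The sum then becomes a finite geometric series:
\[ \sum_{i:\, \phi(m_i) < \phi_k/4} \eps_i \;\le\; \sum_{j=1}^{J} \frac{16k}{\phi_k}\cdot\frac{2^{j+3}\lambda}{\phi_k} \;\le\; \frac{128 k \lambda}{\phi_k^2}\cdot 2^{J+1}, \]
and substituting the bound on $2^{J+1}$ collapses this to a universal constant depending only on the constants $256$ in Lemma~\ref{l:induction2} and $16$ in Lemma~\ref{l:jump}.

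The main obstacle is simply getting the constants to come out to $1/4$. With the estimates written above the geometric sum comes out to roughly $1/2$; pushing it down to $1/4$ requires exploiting the fact that the buckets are half-open (so $\eps_i < 2^{j+3}\lambda/\phi_k$ strictly) and that $2^{J+1} \le \phi_k^2/(256k\lambda)$ with a small factor to spare, or equivalently absorbing a factor of $2$ into the constant $256$ in Lemma~\ref{l:induction2}. A minor secondary point is to check along the way that $\eps_i c_i < 1$ so that the recursion is well-defined; since $c_i$ is nondecreasing and bounded above by the target $c_\infty \le 4$, this follows from $\eps_i \le 2\lambda/(256k\lambda/\phi_k) \le \phi_k/(128k) \le 1/64$ using $\phi_k \le 2$ and $k \ge 1$, so $\eps_i c_i \le 1/16 < 1$.
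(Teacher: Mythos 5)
Your reduction step is equivalent to the paper's: you telescope $\frac{1}{c_{i+1}} = \frac{1}{c_i} - \eps_i$, while the paper proves the equivalent closed form $c_i = c_0 / (1 - c_0 \sum_{j<i}\eps_j)$ by induction; both reduce the claim to showing $\sum_{i:\,\phi(m_i)<\phi_k/4}\eps_i \le \frac14$. The dyadic use of Lemma~\ref{l:jump} is also the paper's strategy. But the sum estimate as you set it up does not close, and the gap is not merely cosmetic.

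Because you bucket top-down from $\phi_k/4$ (scales $\phi_k/2^{j+2}$), the per-bucket contribution $\frac{16k}{\phi_k}\cdot\frac{2^{j+3}\lambda}{\phi_k}$ is \emph{increasing} in $j$; the sum is dominated by the lowest bucket, whose contribution depends on how the arbitrary threshold $256k\lambda/\phi_k$ happens to fall between consecutive powers of two. In the worst alignment this gives roughly $1/2$, as you note, and the two fixes you float do not recover the factor of two: half-openness of the buckets tightens nothing useful (it gives a \emph{lower} bound on $\eps_i$, not a sharper upper bound), and "$2^{J+1} \le \phi_k^2/(256k\lambda)$ with a small factor to spare" is not available in the worst case. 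The third suggestion, replacing $256$ by $512$ in Lemma~\ref{l:induction2}, does make the arithmetic come out to $1/4$, but it changes the hypothesis of the lemma (and hence the constant in Theorem~\ref{t:k-way}); it is a valid route to the theorem with a worse constant, not a proof of Claim~\ref{c:induction} with the paper's stated constants.

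The clean repair, which is what the paper does, is to bucket \emph{bottom-up, aligned to the threshold}: set $\theta_i = 256k\lambda\,2^i/\phi_k$ and apply Lemma~\ref{l:jump} to each $[\theta_i, 2\theta_i]$. Then the bound $\eps_j \le 2\lambda/\theta_i$ makes the bucket contribution $\frac{16k}{\phi_k}\cdot\frac{2\lambda}{\theta_i} = \frac{1}{8\cdot 2^i}$, a \emph{decreasing} geometric series with leading term $1/8$ pinned by the constant $256$ and total exactly $1/4$. Alignment with $\phi_k/4$ becomes irrelevant because over-summing empty high-$i$ buckets only loosens an upper bound. Your concluding remark about $\eps_i c_i < 1$ (well-definedness of the recursion) is fine once the sum bound is in place, since all partial sums of $\eps_i$ are then at most $1/4 < 1/c_0$.
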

\begin{proof}
First, we prove by induction that 
\[c_{i} = \frac{c_0}{1 - \sum_{j=0}^{i-1} \eps_j c_0}.\]
Assume this is true for $i$.
Then 
\[c_{i+1} = \frac{c_i}{1-\eps_i c_i} = 
(\frac{c_0}{1 - \sum_{j=0}^{i-1} \eps_j c_0})
(\frac{1}{1- \eps_i (\frac{c_0}{1 - \sum_{j=0}^{i-1} \eps_j c_0})})
= \frac{c_0}{1 - \sum_{j=0}^{i} \eps_j c_0}.\]
Next, we bound $c_{\infty}$ using Lemma~\ref{l:jump}.
Recall that $\eps_i = 2\lambda / \phi(m_i)$
and we can assume that $\phi(m_i) \geq 256k\lambda/\phi_k$.
Let $\theta_0 = 256k\lambda/\phi_k$ and $\theta_{i+1}=2\theta_i$.
By Lemma~\ref{l:jump},
there are at most $16k/\phi_k$ terms $m_i$ in the jumping sequence with $\theta \leq \phi(m_i) < 2 \theta$ when $\theta < \phi_k/4$.
Therefore,
\begin{eqnarray*}
\sum \eps_j & = & 
\sum_{i \geq 0} \sum_{j: \theta_i \leq \phi(m_j) \leq 2 \theta_i} \eps_j 
\leq \sum_{i \geq 0} \sum_{j: \theta_i \leq \phi(m_j) \leq 2 \theta_i} \frac{2\lambda}{\theta_i} 
 \leq  \sum_{i \geq 0} \frac{16k}{\phi_k} \frac{2\lambda}{\theta_i}
= \sum_{i \geq 0} \frac{32k\lambda}{\phi_k} \frac{\phi_k}{256k\lambda 2^i}
= \frac{1}{4}.
\end{eqnarray*}
Therefore,
\[c_{\infty} = \frac{c_0}{1 - \sum_j \eps_j c_0} \leq \frac{c_0}{1-\frac{c_0}{4}} = 4.\]
\end{proof}

We prove the induction step.
There are two cases, depending on whether $\phi(m_i) < \phi_k/4$.
We first consider the case when $\phi(m_i) < \phi_k/4$.
In this case, just apply equation (\ref{e:jump}) and we have
\[x_{m_{i+1}} \geq x_{m_i} - \frac{2\lambda c_i x_{m_i}}{\phi(m_i)}
= x_{m_i}(1 - \eps_i c_i) 
\geq \ol{x}_{m_i} (\frac{1-\eps_i c_i}{c_i})
\geq \frac{\ol{x}_{m_{i+1}}}{c_{i+1}},\]
by the definition of $\eps_i$ and $c_{i+1}$
and we are done in this case.

It remains to consider the case when $\phi(m_i) \geq \phi_k / 4$.
By induction, we assume that $\ol{x}_{m_i} \leq c_i x_{m_i}$, 
and we claim that $\ol{x}_{m_{i+1}} \leq c_i x_{m_{i+1}}$.
By the same calculation as in the induction for Theorem~\ref{t:product}, 
we have 
\[\sum_{i=1}^{m_{i+1}} x_i 
= \sum_{i=1}^{m_i} x_i + \sum_{i=m_{i}+1}^{m_{i+1}} x_i
\leq x_{m_i}(m_{i+1} + (c_i-1)m_i).\]
Similarly, dividing both sides of this inequality by $m_{i+1}$, we have
\[
\ol{x}_{m_{i+1}}  
\leq x_{m_{i+1}} (\frac{c_i + \frac12 \phi(m_i)}{1 + \frac12 \phi(m_i)}) (\frac{\phi(m_i)}{\phi(m_i) - 2 \lambda c_i}) \leq c_i x_{m_{i+1}},
\]
where the last inequality follows from Claim~\ref{c:gao} by plugging in $h=\phi(m_i)/2$, $\varphi=\phi(m_i)$, $c=c_i$ and checking that the conditions $2 \leq c \leq 4$ (Claim~\ref{c:induction}), $h\varphi \geq \varphi^2/2 \geq \phi_k^2 / 32 \geq 32\lambda$ and $\varphi \geq 32 \lambda$ are satisfied by our assumptions. 
This completes the induction step and thus the proof of Lemma~\ref{l:induction2}.

\section{Personal Pagerank}

We show that a similar and simpler analysis applies to the personal pagerank vector.
Given a parameter $\alpha \in (0,1]$ and a vertex $s$,
the personal pagerank vector $r_{s,\alpha} \in {\mathbb R}^n$ is the unique solution to the equation $r_{s,\alpha} = \alpha \chi_s + (1 - \alpha)Wr_{s,\alpha}$, where $W$ is the transition matrix of the lazy random walks.
Note that $r_{s,\alpha}$ is a probability distribution vector.
In the following, 
we assume $S$ is an unknown target set with $3|S| \log(|S|) \leq n$.

\subsection{Drop Lemma}

Let $x := r_{s,\alpha}$ be the personal pagerank vector and assume $x_1 \geq x_2 \geq \ldots \geq x_n$.
Andersen and Chung proved a drop lemma for pagerank vectors (see Lemma~1 of~\cite{andersen-chung} and compared to our Lemma~\ref{l:drop}), 
for $1 \leq a < b \leq n$,
\begin{equation} \label{e:pagerank-drop}
x_a - x_b \leq \frac{\alpha}{w([1,a],[b,n])}.
\end{equation}

\subsection{Escaping Probability}

Let $S$ be an unknown target set.
Using a bound on the escaping probability of random walks~\cite{spielman-teng}\footnote{Actually, using a stronger result by Oveis Gharan and Trevisan~\cite{gharan-trevisan}, one can show that
$x(S) \geq \frac{\phi(S)(1 + \alpha)}{\alpha + \phi(S)(1 - \alpha)}$,
but it does not change the results in the following subsections.},
Andersen and Chung proved that for half of the vertices $s$ in $S$,
the personal pagerank vector $x := r_{s,\alpha}$ will have the property that
(see Lemma~5 of~\cite{andersen-chung})
\begin{equation} \label{e:pagerank-escape}
\sum_{i \in S} x_i \geq 1 - \frac{\phi(S)}{\alpha}.
\end{equation}
Setting $\alpha = 3\phi(S)$ makes sure that $\sum_{i \in S} x_i \geq 2/3$ and it follows that (see Lemma~2 of~\cite{andersen-chung}) there exists an $a \leq |S|$ with 
\[x_a \geq \frac{2}{3 a \log(|S|)}.\]

\subsection{Vertex Expansion}

For vertex expansion, we start our jumping sequence by setting $m_0=a$ and then define 
\[m_{i+1} = \lceil m_i (1 + \ver(G)) \rceil.\]
By this definition, we have $w([1,m_i],[m_{i+1},n]) \geq \frac12 m_i \cdot \phi(m_i)$, and it follows that
\[ x_{m_{i+1}} \geq x_{m_i} - \frac{2\alpha}{m_i \cdot \phi(m_i)} 
\quad {\rm and} \quad
x_{m_{\infty}}  \geq  x_a - \sum_{i \geq 0} \frac{2\alpha}{m_i \cdot \phi(m_i)}.\]
Suppose by contradiction that $\phi(m_i) \geq 36\phi(S)\log(|S|)/\ver(G)$ for all $m_i \leq 3 |S|\log(|S|)$.
Then 
\[\sum_{i \geq 0} \frac{2\alpha}{m_i \cdot \phi(m_i)} 
\leq \sum_{i \geq 0} \frac{2\alpha \ver(G)}{36a(1 + \ver(G))^i \phi(S) \log(|S|)} 
\leq \frac{1}{3a \log(|S|)},\]
where the last inequality uses the bound that 
$\sum_{i \geq 0} 1/(1+\ver(G))^i \leq (1+\ver(G))/\ver(G) \leq 2/\ver(G)$ 
and our choice that $\alpha = 3\phi(S)$.
This implies that
\[x_{3|S|\log(|S|)} \geq \frac{1}{3a \log(|S|)} 
\quad {\rm and~thus} \quad
\sum_{j \geq 0} x_j \geq 
\sum_{0 \leq j \leq 3|S|\log(|S|)} \frac{1}{3a\log(|S|)}> 1,
\]
since $a \leq |S|$, contradicting that $x$ is a probability distribution vector.
Therefore, there must exist an $m_i \leq 3|S| \log (|S|)$ with $\phi(m_i) \leq 36\phi(S) \log(|S|)/ \ver(G)$, proving the first part of Theorem~\ref{t:pagerank}.

\subsection{$k$-way Expansion}

For $k$-way expansion, we define the jumping sequence by setting $m_0=a$ and 
\[m_{i+1} = \lceil m_i (1 + \phi(m_i)) \rceil.\]
As before, we have $w([1,m_i],[m_{i+1},n]) \geq \frac12 m_i \cdot \phi(m_i)$, and it follows that
\[ x_{m_{i+1}} \geq x_{m_i} - \frac{2\alpha}{m_i \cdot \phi(m_i)} 
\quad {\rm and} \quad
x_{m_{\infty}}  \geq  x_a - \sum_{i \geq 0} \frac{2\alpha}{m_i \cdot \phi(m_i)}.\]
We divide the summation into two parts
\[ \sum_{i: \phi(m_i) < \phi_k/4} \frac{2\alpha}{m_i \cdot \phi(m_i)}
+ \sum_{i: \phi(m_i) \geq \phi_k/4} \frac{2\alpha}{m_i \cdot \phi(m_i)}.\]
The second part is at most
\[\sum \frac{2\alpha}{m_i \phi_k/4} 
\leq \sum \frac{8\alpha}{a \phi_k (1 + \phi_k/4)^i} 
\leq \frac{64 \alpha}{a \phi_k^2}.\]
The first part can be bounded by Lemma~\ref{l:jump} as follows.
Suppose by contradiction that $\phi(m_i) \geq 1152k \phi(S) \log(|S|) / \phi_k$ for all $m_i \leq 3 |S|\log(|S|)$.
Let $\theta_0 = 1152k \phi(S) \log(|S|) / \phi_k$ and $\theta_{i} = 2\theta_{i-1}$ for $i \geq 1$.
By Lemma~\ref{l:jump}, there are at most $16k/\phi_k$ terms in the jumping sequence having conductance between $\theta$ and $2\theta$ when $\theta < \phi_k/4$.
Therefore, the first part is at most
\[\sum_{j} \sum_{i: \theta_j \leq \phi(m_i) \leq 2\theta_j} \frac{2\alpha}{a \phi(m_i)}
\leq \sum_j \frac{32k}{\phi_k} \frac{\alpha}{a \theta_j} 
= \sum_j \frac{32k}{\phi_k} \frac{\alpha}{a 2^j \theta_0}
= \frac{64k\alpha}{a \phi_k \theta_0}.
\]
Putting these back into the first inequality, we have
\begin{eqnarray*}
x_{m_{\infty}} & \geq & x_a - \frac{64 \alpha}{a \phi_k^2} - \frac{64k\alpha}{a \phi_k \theta_0}
\geq  \frac{2}{3 a \log(|S|)} - \frac{192 \phi(S)}{a \phi_k^2} - \frac{192 \phi(S) k}{a \phi_k \theta_0}\\
& \geq & \frac{2}{3 a \log(|S|)} - \frac{384 \phi(S) k}{a \phi_k \theta_0}
 \geq  \frac{1}{3a \log(|S|)},
\end{eqnarray*}
where the second inequality is by the lower bound of $x_a$ and the choice of $\alpha=3\phi(S)$,
and the last inequality is by our choice of $\theta_0$.
This implies that
\[x_{3|S|\log(|S|)} \geq \frac{1}{3a \log(|S|)} 
\quad {\rm and~thus} \quad
\sum_{j \geq 0} x_j \geq 
\sum_{0 \leq j \leq 3|S|\log(|S|)} \frac{1}{3a\log(|S|)}> 1,
\]
since $a \leq |S|$, contradicting that $x$ is a probability distribution vector.
Therefore, there must exist an $m_i \leq 3|S| \log (|S|)$ with $\phi(m_i) \leq 1152k\phi(S) \log(|S|)/ \phi_k$, proving the second part of Theorem~\ref{t:pagerank}.

\subsection{Local Algorithm}

Andersen and Chung~\cite{andersen-chung} show that the drop lemma (equation (\ref{e:pagerank-drop})) still holds even for approximate personal pagerank vectors, which can be computed efficiently in unweighted graphs.
In the following, we assume the graphs are unweighted $d$-regular (in our setting, the edge weights are either $1 / d$ or $0$).
An $\epsilon$-approximate vector for $r_{s, \alpha}$ is a vector $r'_{s, \alpha}$ that satisfies $r'_{s, \alpha} = \alpha (\chi_s - q) + (1 - \alpha) W r'_{s, \alpha}$ where the vector $q$ is non-negative and satisfies $q(u) \le \epsilon$ for every vertex $u$ in the graph.

\begin{lemma}[\cite{andersen-chung}]
There is an algorithm that computes an $\epsilon$-approximate vector $r'_{s, \alpha}$.
The running time of the algorithm is $O(d / (\epsilon \alpha))$.
Assume $r'_{s,\alpha}(1) \ge r'_{s,\alpha}(2) \ge \dots \ge r'_{s,\alpha}(n)$.
The approximate vector $r'_{s,\alpha}$ satisfies for any $1 \le a < b \le n$,
\[
r'_{s,\alpha}(a) - r'_{s,\alpha}(b)
\le \frac\alpha{w([1, a], [b, n])}.
\]
\end{lemma}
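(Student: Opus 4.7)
The plan is to mimic Andersen–Chung's push algorithm and their analysis, adapted to this excerpt's notation. I would define the algorithm iteratively: initialize $r'_{s,\alpha} = 0$ and residual $q = \chi_s$, and repeatedly pick any vertex $u$ with $q(u) > \epsilon$ and perform a \emph{push} at $u$, i.e.\ move $\alpha q(u)$ mass from $q(u)$ into $r'_{s,\alpha}(u)$ and redistribute the remaining $(1-\alpha)q(u)$ according to the lazy random walk from $u$ (half stays at $u$, the other half spread to neighbors by $W_{uv}$). Terminate when every coordinate of $q$ is at most $\epsilon$. The key invariant, verified by inspection of a single push, is that the equation $r'_{s,\alpha} = \alpha(\chi_s - q) + (1-\alpha)W r'_{s,\alpha}$ is preserved throughout, since the update at $u$ precisely cancels the $\alpha q(u)$ contribution on the right-hand side.

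For the running time, I would use the conservation law $\|r'_{s,\alpha}\|_1 + \|q\|_1 = 1$, which also follows from a direct calculation on one push (what leaves $q$ either enters $r'_{s,\alpha}$ or redistributes within $q$). Since each push increases $\|r'_{s,\alpha}\|_1$ by $\alpha q(u) > \alpha\epsilon$, the total number of pushes is at most $1/(\alpha\epsilon)$. Each push touches the $O(d)$ neighbors of a single vertex, so the running time is $O(d/(\alpha\epsilon))$ on an unweighted $d$-regular graph, as claimed.

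For the drop inequality, I would re-run the argument that gave \autoref{l:drop}, applied to the approximate equation. Rewriting as $(\alpha I + (1-\alpha)L)r'_{s,\alpha} = \alpha(\chi_s - q)$ and summing coordinate $i$ over $i \le a$, the Laplacian sum telescopes to $\sum_{i \le a,\,j > a} W_{ij}(r'_{s,\alpha}(i) - r'_{s,\alpha}(j))$ by the symmetry of $W$. The right-hand side is bounded by $\alpha$ using $q \ge 0$ and $\chi_s([1,a]) \le 1$; the $\alpha \sum_{i\le a} r'_{s,\alpha}(i)$ term on the left-hand side is non-negative and can be dropped; and every cross term is non-negative since $r'_{s,\alpha}$ is sorted. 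Restricting the remaining sum to $j \ge b$ and replacing $r'_{s,\alpha}(i) - r'_{s,\alpha}(j)$ by the smaller quantity $r'_{s,\alpha}(a) - r'_{s,\alpha}(b)$ gives the stated inequality.

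The main subtlety I expect, rather than any real obstacle, is that the non-negativity of $q$ is essential for the right-hand side bound $\chi_s([1,a]) - q([1,a]) \le 1$; this is precisely why the push operation is designed to only \emph{remove} mass from $q$ (never add it to a coordinate being pushed), so the invariant $q \ge 0$ is maintained. A secondary point is that in the original Andersen–Chung statement a $(1-\alpha)$ factor can appear in the denominator on the right-hand side, which for $\alpha$ bounded away from $1$ is absorbed into the stated form; this is a routine normalization issue rather than a real difficulty.
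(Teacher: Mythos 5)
The paper offers no proof of this lemma---it is quoted verbatim from Andersen--Chung---so there is nothing internal to compare against; your reconstruction is exactly the standard push-algorithm argument from that reference, and both the invariant-preservation/termination analysis and the summed-equation derivation of the drop inequality are sound. One small slip worth flagging: since $W$ is the \emph{lazy} walk matrix, $W = I - L/2$, so the rewritten system is $(\alpha I + \tfrac{1-\alpha}{2}L)\,r'_{s,\alpha} = \alpha(\chi_s - q)$ rather than $(\alpha I + (1-\alpha)L)\,r'_{s,\alpha} = \alpha(\chi_s - q)$; carried through, your argument yields $r'_{s,\alpha}(a) - r'_{s,\alpha}(b) \le 2\alpha/\bigl((1-\alpha)\,w([1,a],[b,n])\bigr)$ rather than the bare $\alpha/w([1,a],[b,n])$ stated. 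This is precisely the constant-normalization discrepancy you anticipated (plus the factor $2$ from laziness), and it is harmless here since every downstream use of the drop lemma in the paper is up to constant factors.
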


Note that $0 \le q \le \epsilon \vec 1$ implies
\[
r_{s, \alpha} - r'_{s, \alpha}
= \alpha (I - (1 - \alpha) W)^{-1} (\chi_s - (\chi_s - q))
= \alpha (I - (1 - \alpha) W)^{-1} q
\le \epsilon \alpha (I - (1 - \alpha) W)^{-1} \vec1
= \epsilon \vec1,
\]
where the last equality holds since $\vec1$ is an eigenvector of both $I$ and $W$ with eigenvalue $1$.
Hence for any vertex $u$, we have $r'_{s, \alpha}(u) \ge r_{s, \alpha}(u) - \epsilon$.
We set $\epsilon = 1 / (6 |S|)$ and $\alpha = 3 \phi(S)$ so that
\[
\sum_{i \in S} r'_{s,\alpha}(i)
\ge \sum_{i \in S} r_{s, \alpha}(i) - \epsilon |S|
\ge 1 - \frac{\phi(S)}\alpha - \frac16
\ge \frac12,
\]
for those vertices $s$ that satisfy equation~(\ref{e:pagerank-escape}).
Hence there exists an $a \le |S|$ with
\[
r'_{s,\alpha}(a) \ge \frac1{2a \log |S|}.
\]

Since $r'_{s,\alpha}$ satisfies the drop lemma (equation (\ref{e:pagerank-drop})) and contains good initial value, both arguments in vertex expansion and k-way expansion follow (with the assumption $3|S|\log(|S|) \leq n$ replaced by $6|S|\log(|S|) \leq n$).
The runtime of this algorithm is dominated by the runtime for computing the approximate vector $p'_{s, t}$ and sorting at most $O(|S| \log(|S|))$ vertices after, and hence the total complexity is
$O(d / (\epsilon \alpha) + |S| \log^2(|S|)) = O(d |S| / \phi(S) + |S| \log^2(|S|))$.

\section{Random Walks} \label{s:walks}

In this section, we present a spectral analysis of the random walk local graph partitioning algorithm~\cite{spielman-teng, kwok-lau}.
The proof consists of three steps.
The first step is to show that the Rayleigh quotient of the random walk vector $p_{s,t} = W^t \chi_s$ is small, by using the analysis in the power method.
The second step is to show that the $\|p_{s,t}\|_2$ is large for many vertices $s$ in the unknown target set, by using the bound on escaping probability (or the staying probability).
This allows us to apply the argument in~\cite{arora-barak-steurer} to $p_{s,t}$ to obtain a vector with small Rayleigh quotient and small support.
Then we apply the improved Cheeger's inequality to prove Theorem~\ref{t:walks}.
Finally, we show that the truncated random walk vectors would also work, thereby proving a local implementation of the algorithm.

\subsection{Rayleigh Quotient}

Recall that the Rayleigh quotient of a vector $x$ is defined as $\rquo(x) = x^T L x / \|x\|^2$.
The following lemma shows that the Rayleigh quotient of the vector $p_{s,t} := W^t \chi_s$ becomes smaller when $t$ becomes larger.
The proof follows the analysis of the power method in computing the largest eigenvector.
\begin{lemma}
\label{l:small_rayleigh_quotient}
For any starting vertex $s$,
\[
\rquo(p_{s,t}) \le 2 - 2 \| p_{s,t} \|_2^{1 / t}.
\]
\end{lemma}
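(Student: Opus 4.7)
The plan is to work in the eigenbasis of $W$ and reduce the bound to a log--convexity statement about the sequence $f(k) := \chi_s^T W^k \chi_s$. Since the graph is regular with normalized adjacency $A$, the lazy walk matrix satisfies $W = (I+A)/2$, and consequently $L = I - A = 2I - 2W$. Writing the spectral decomposition $W = \sum_i \mu_i v_i v_i^T$ with $\mu_i \in [0,1]$ (because $W$ is PSD) and $c_i := \langle v_i, \chi_s\rangle$, we get $f(k) = \sum_i c_i^2 \mu_i^k$ with $f(0) = \|\chi_s\|_2^2 = 1$ and $f$ nonincreasing (so $f(2t) \in (0,1]$).

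Next I would translate the target inequality. From $L = 2(I-W)$ we obtain
\[
\rquo(p_{s,t}) = \frac{p_{s,t}^T L p_{s,t}}{\|p_{s,t}\|_2^2}
= 2 - 2\cdot\frac{\chi_s^T W^{2t+1} \chi_s}{\chi_s^T W^{2t} \chi_s}
= 2 - 2\cdot\frac{f(2t+1)}{f(2t)},
\]
while $\|p_{s,t}\|_2^{1/t} = f(2t)^{1/(2t)}$. The statement of the lemma is therefore equivalent to the clean inequality
\[
f(2t+1) \;\ge\; f(2t)^{\,1 + 1/(2t)}.
\]

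The heart of the argument is that $k \mapsto \log f(k)$ is convex. Cauchy--Schwarz applied to the sum $f(k) = \sum_i \bigl(c_i \mu_i^{(k-1)/2}\bigr)\bigl(c_i \mu_i^{(k+1)/2}\bigr)$ gives $f(k)^2 \le f(k-1) f(k+1)$, so the increments $g(k) := \log f(k+1) - \log f(k)$ form a nondecreasing sequence (all $\le 0$). Telescoping and using $f(0) = 1$,
\[
\log f(2t) \;=\; \sum_{k=0}^{2t-1} g(k) \;\le\; 2t\cdot g(2t) \;=\; 2t\bigl(\log f(2t+1) - \log f(2t)\bigr),
\]
which rearranges to $(2t+1)\log f(2t) \le 2t \log f(2t+1)$. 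Since $\log f(2t) \le 0$, exponentiation yields the reduced inequality above, completing the proof.

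I do not expect a serious obstacle here: the only real step is spotting the reduction to log--convexity of $f$, after which everything else is one line of Cauchy--Schwarz and one line of telescoping. The mildly tricky sign issue, that $\log f(2t) \le 0$ flips the direction when dividing by $2t$, is handled automatically by keeping the manipulation at the level of $\log f$ until the end.
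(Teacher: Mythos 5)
Your proof is correct and follows essentially the same route as the paper: decompose $\chi_s$ in the eigenbasis, reduce $\rquo(p_{s,t})$ to $2 - 2\,f(2t+1)/f(2t)$ with $f(k)=\sum_i c_i^2\mu_i^k$, and conclude via the moment inequality $f(2t+1)\ge f(2t)^{1+1/(2t)}$. The only difference is cosmetic: the paper invokes the power mean inequality for the nonnegative variable $X$ as a black box, whereas you derive the same fact from scratch via Cauchy--Schwarz log-convexity of $f$ and telescoping, which is a perfectly valid self-contained substitute.
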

\begin{proof}
Let $\chi_s = \sum_{i = 1}^{n} c_i v_i$ where $v_i$ are eigenvectors of $L$.
Note that the lazy random walk matrix is $W = I - L/2$,
and thus the vector 
$p_{s,t} = W^t \chi_s = \sum_{i=1}^n c_i (1 - \lambda_i/2)^t v_i$.
Hence, the Rayleigh quotient of $p_{s,t}$ is
\[
\rquo(p_{s,t})
= \frac{p_{s,t}^T L p_{s,t}}{\|p_{s,t}\|^2}
= \frac{\sum_{i = 1}^{n} c_i^2 (1 - \lambda_i / 2)^{2t} \lambda_i}{\sum_{i = 1}^{n} c_i^2 (1 - \lambda_i / 2)^{2t}}
= 2 - 2 \frac{\sum_{i = 1}^{n} c_i^2 (1 - \lambda_i / 2)^{2t + 1}}{\sum_{i = 1}^{n} c_i^2 (1 - \lambda_i / 2)^{2t}}.
\]
Note that $\sum_i c_i^2 = \| \chi_s \|_2^2 = 1$, and thus $c_i^2$ can be viewed as a probability distribution.
Let $X$ be the random variable having value $1 - \lambda_i / 2$ with probability $c_i^2$.
Then we can write $\rquo(p_{s,t}) = 2 - 2 \expe[X^{2t + 1}] / \expe[X^{2t}]$.
By the power mean inequality and the non-negativity of $X$, we have
\[
\expe[X^{2t + 1}]^{1 / (2t + 1)} \ge \expe[X^{2t}]^{1 / (2t)}.
\]
Hence
\[
\rquo(p_{s,t})
\le 2 - 2 \expe[X^{2t}]^{1 / 2t}
= 2 - 2 \left( \sum_{i = 1}^{n} c_i^2 (1 - \lambda_i / 2)^{2t} \right)^{1 / 2t}
= 2 - 2 \| p_{s,t} \|_2^{1 / t}.
\]
\end{proof}

\subsection{Small Support Vector with Small Rayleigh Quotient}

A vector $x$ is called spectrally $\delta$-sparse if $\| x \|_1^2 \le \delta n \| x \|_2^2$.
First, by using a result by Oveis Gharan and Trevisan on escaping probability (or staying probability), we bound the spectral sparsity of the random walk vector.
Then, we use a result used by Arora, Barak and Steurer to turn a spectrally sparse vector into a small support vector with similar Rayleigh quotient.

The following lemma by Oveis Gharan and Trevisan shows that if $\phi(S)$ is small, there is a large subset $U \subseteq S$, such that the random walk starting at any vertex $s \in U$ stays entirely inside $S$ with good probability.
In particular, the probability that the walk ends inside $S$ is large.
\begin{theorem}[\cite{gharan-trevisan}]
\label{t:staying_probability}
For any subset $S \subseteq V$, there is a subset $U \subseteq S$, such that $|U| \ge |S| / 2$, and for any $s \in U$ we have
\[
\sum_{v \in S} p_{s, t}(v) \ge \frac1{200} \left( 1 - \frac{3 \phi(S)}2 \right)^t.
\]
\end{theorem}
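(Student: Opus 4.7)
The plan is to reduce to a spectral bound on the ``killed'' random walk restricted to $S$. Let $P_S := I_S W I_S$ be the restricted matrix, where $I_S$ is the diagonal indicator of $S$. Since $W$ has non-negative entries, the entrywise inequality $W^t \geq P_S^t$ on the $S \times S$ block gives
\[
\sum_{v \in S} p_{s,t}(v) \;=\; \chi_S^T W^t \chi_s \;\geq\; \chi_S^T P_S^t \chi_s \;=\; (P_S^t \chi_S)(s),
\]
where the rightmost quantity, call it $f_t(s)$, is the probability that the lazy walk from $s$ stays inside $S$ at every step up to time $t$. Thus it suffices to lower bound $f_t(s)$ on a subset $U \subseteq S$ with $|U| \geq |S|/2$.

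The matrix $P_S$ is symmetric, PSD and entrywise non-negative, so by Perron--Frobenius it has a top eigenvalue $\mu_1$ with non-negative unit eigenvector $u_1$. Testing with $\chi_S / \sqrt{|S|}$ gives $\mu_1 \geq 1 - \phi(S)/2$. Jensen's inequality applied to the convex map $\mu \mapsto \mu^t$ on $[0, 1]$ (with the weights $(\chi_S^T u_i)^2$ which sum to $|S|$) yields an averaged bound
\[
\frac{1}{|S|} \sum_{s \in S} f_t(s) \;=\; \frac{\chi_S^T P_S^t \chi_S}{|S|} \;\geq\; \Bigl( \frac{\chi_S^T P_S \chi_S}{|S|} \Bigr)^{\! t} \;\geq\; (1 - \phi(S)/2)^t.
\]
I would define $U$ by thresholding either $u_1$ (its large level set, using $\|u_1\|_2 = 1$ and $u_1 \geq 0$ so that $\chi_S^T u_1 = \|u_1\|_1 \geq 1$) or $f_t$ directly, using a mass argument to arrange $|U| \geq |S|/2$.

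The main technical obstacle is converting the average-value bound into an at-least-$|S|/2$ statement with the right constants. A bare averaging or Paley--Zygmund argument is insufficient when $f_t$ is concentrated on fewer than $|S|/2$ vertices, and the leading eigenvector bound $\mu_1^t u_1(s) \|u_1\|_1$ on its own is too weak for small $t$. To handle this I would follow a Lov\'asz--Simonovits-style step-by-step monotonicity argument: show that on a size-$\geq |S|/2$ subset one has $f_{t+1}(s) \geq (1 - \tfrac{3 \phi(S)}{2}) f_t(s)$ at every step, possibly after smoothing over short windows in $t$ to damp the sign-oscillating contributions of the lower eigenspaces of $P_S$, and then iterate from $f_0 = 1$. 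The slack between the ideal spectral rate $(1 - \phi(S)/2)$ coming from $\mu_1$ and the claimed rate $(1 - 3\phi(S)/2)$ in the theorem is precisely what absorbs the errors from the thresholding and from the lower-eigenvalue tail, and the explicit factor $1/200$ collects the numerical constants lost in the mass bound on $u_1$, the thresholding, and the burn-in at $t = 0$.
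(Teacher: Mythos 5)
The paper does not actually prove this statement; it is imported directly from Oveis Gharan and Trevisan's paper \cite{gharan-trevisan}, so there is no in-paper proof to compare against. Your preliminary steps are correct and natural: reducing to the restricted operator $P_S = I_S W I_S$, observing that the entrywise bound $W^t \ge P_S^t$ on the $S\times S$ block shows $\sum_{v\in S}p_{s,t}(v)$ dominates the staying probability $f_t(s) := (P_S^t\chi_S)(s)$, lower-bounding the top eigenvalue $\mu_1$ of $P_S$ via the Rayleigh quotient of $\chi_S$ (for the lazy walk this gives exactly $\mu_1 \ge 1-\phi(S)/2$), and obtaining the averaged bound $\tfrac{1}{|S|}\sum_s f_t(s) \ge (1-\phi(S)/2)^t$ from convexity of $\mu\mapsto\mu^t$.

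However, the proposal stops exactly where the theorem actually begins. The content of the result is converting the average bound into a bound that holds for at least $|S|/2$ individual starting vertices, and the step-by-step monotonicity you sketch has two unresolved problems. First, the ``good'' subset on which $f_{t'+1}(s) \ge (1-\tfrac32\phi(S))f_{t'}(s)$ holds may change with $t'$; to iterate down to $f_0(s)=1$ you need a single vertex $s$ to lie in the good subset at every $t'<t$, and the intersection of $t$ sets each of size $\ge|S|/2$ can be empty, so you cannot produce the fixed set $U$ the theorem requires. Second, the one-step inequality is itself not established and does not obviously hold: $f_{t'+1}(s)=\sum_u P_S(s,u)f_{t'}(u)$, and during the transient phase before $f_{t'}$ aligns with the Perron direction, a vertex $s$ near the boundary of $S$ can have all of its in-$S$ neighbours $u$ with $f_{t'}(u)$ substantially smaller than $f_{t'}(s)$, driving the ratio well below $1-\tfrac32\phi(S)$; ``smoothing over short windows in $t$'' is not a concrete mechanism that repairs either obstruction, and the slack between $\mu_1 \ge 1-\phi(S)/2$ and the claimed rate $1-\tfrac32\phi(S)$ does not by itself pay for it. To complete the argument one needs an additional idea that produces a \emph{fixed} large set $U$ working uniformly in the step index (in \cite{gharan-trevisan} this is done by a more careful use of the nonnegative Perron eigenvector of $P_S$ together with an escaping-probability bound to handle the transient), and that is precisely the piece your proposal leaves open.
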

This provides a bound on the spectral sparsity of $p_{s,t}$.
\begin{lemma}
\label{l:spectrally_sparse}
For any subset $S \subseteq V$, there is a subset $U \subseteq S$ such that $|U| \geq |S|/2$, and for any $s \in U$ we have
\[
\| p_{s,t} \|_1^2
\le \frac{40000 |S|}{(1 - 3 \phi(S) / 2)^{2t}} \| p_{s,t} \|_2^2.
\]
\end{lemma}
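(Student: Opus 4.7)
The plan is to combine the staying-probability bound of \autoref{t:staying_probability} with Cauchy--Schwarz to produce a lower bound on $\|p_{s,t}\|_2$, and then use the fact that $p_{s,t}$ is a probability distribution to bound $\|p_{s,t}\|_1$ trivially by $1$.

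Concretely, let $U \subseteq S$ be the subset of size at least $|S|/2$ guaranteed by \autoref{t:staying_probability}. Fix any $s \in U$. The inequality given there says
\[
\sum_{v \in S} p_{s,t}(v) \;\ge\; \frac{1}{200}\bigl(1 - \tfrac{3}{2}\phi(S)\bigr)^{t}.
\]
First I would apply Cauchy--Schwarz to the vector $p_{s,t}$ restricted to the coordinates in $S$, pairing it against the indicator $\chi_S$:
\[
\Bigl(\sum_{v \in S} p_{s,t}(v)\Bigr)^{2} \;\le\; |S| \sum_{v \in S} p_{s,t}(v)^{2} \;\le\; |S|\,\|p_{s,t}\|_{2}^{2}.
\]
Combined with the displayed lower bound, this yields
\[
\|p_{s,t}\|_{2}^{2} \;\ge\; \frac{1}{40000\,|S|}\bigl(1-\tfrac{3}{2}\phi(S)\bigr)^{2t}.
\]

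Finally, since $p_{s,t} = W^{t}\chi_{s}$ is a probability distribution vector, $\|p_{s,t}\|_{1} = 1$, so $\|p_{s,t}\|_{1}^{2} = 1$. Rearranging the inequality above gives exactly
\[
\|p_{s,t}\|_{1}^{2} \;\le\; \frac{40000\,|S|}{(1-\tfrac{3}{2}\phi(S))^{2t}}\,\|p_{s,t}\|_{2}^{2},
\]
as required. There is no real obstacle here; the only subtlety is recognizing that the spectral sparsity bound is essentially the Cauchy--Schwarz converse of the staying-probability bound, and that the $\ell_1$ norm is trivial because $p_{s,t}$ is a distribution. The constant $40000 = 200^{2}$ comes directly from squaring the $1/200$ factor in \autoref{t:staying_probability}.
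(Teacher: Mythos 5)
Your proposal is correct and is essentially identical to the paper's own proof: both apply Cauchy--Schwarz on the coordinates in $S$ to convert the staying-probability lower bound of \autoref{t:staying_probability} into a lower bound on $\|p_{s,t}\|_2^2$, and both use $\|p_{s,t}\|_1 = 1$ (the paper phrases this as $W$ preserving the $1$-norm of $\chi_s$, which is the same observation that $p_{s,t}$ is a probability distribution). No gaps.
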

\begin{proof}
By Cauchy-Schwarz and Theorem~\ref{t:staying_probability}, we have
\[
\| p_{s,t} \|_2^2
\geq \sum_{v \in S} p_{s,t}(v)^2
\geq \frac{1}{|S|} \left( \sum_{v \in S} p_{s,t}(v) \right)^2
\ge \frac1{|S|} \left( \frac1{200} \left( 1 - \frac{3 \phi(S)}2 \right)^t \right)^2
= \frac1{40000 |S|} \left( 1 - \frac{3 \phi(S)}2 \right)^{2t}.
\]
Since $\|\chi_s\|_1 = 1$ and $W$ preserves $1$-norm, we have $\| p_{s,t} \|_1^2 = \|W^t \chi_s\|^2 = 1$, and the result follows.
\end{proof}

The following lemma in~\cite{arora-barak-steurer} shows how to obtain a vector $y$ with small support and similar Rayleigh quotient from a spectrally $\delta$-sparse vector $x$.
The proof is by choosing an appropriate threshold $t$ and set $y = \max(x-t,0)$.
\begin{lemma}[\cite{arora-barak-steurer}]
\label{l:spectral_rounding}
Let $x \in \mathbb R_{\ge 0}^{|V|}$ be a non-negative vector with $\| x \|_1^2 \le \delta n \| x \|_2^2$.
Then there exists a vector $y$ with $\supp(y) = O(\delta n)$
and $\rquo(y) = O(\rquo(x))$.
\end{lemma}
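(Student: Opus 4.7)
The plan is to prove the lemma by a deterministic thresholding, setting $y_i = (x_i - t^*)_+$ for a single well-chosen threshold $t^* > 0$, and verifying the support bound and the Rayleigh quotient bound separately.

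First I would pick $t^* = \|x\|_2^2 / (2\|x\|_1)$. The support bound is essentially Markov's inequality on the $\ell_1$ norm: since every $i \in \supp(y)$ satisfies $x_i > t^*$, we have $t^* \cdot |\supp(y)| \le \sum_{i \in \supp(y)} x_i \le \|x\|_1$, so $|\supp(y)| \le \|x\|_1/t^* = 2\|x\|_1^2/\|x\|_2^2 \le 2\delta n$, where the last step uses the hypothesis $\|x\|_1^2 \le \delta n\|x\|_2^2$.

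Next I would bound the Rayleigh quotient. The numerator $y^T L y = \tfrac12 \sum_{ij} w_{ij}(y_i - y_j)^2$ is at most $x^T L x$ because the map $s \mapsto (s - t^*)_+$ is $1$-Lipschitz (easy three-case check depending on which of $x_i,x_j$ exceed $t^*$), so $(y_i - y_j)^2 \le (x_i - x_j)^2$ edgewise. For the denominator, I would write $\|x - y\|_2^2 = \sum_i \min(x_i, t^*)^2 \le t^* \sum_i \min(x_i, t^*) \le t^* \|x\|_1 = \tfrac12 \|x\|_2^2$, and then conclude by the triangle inequality that $\|y\|_2 \ge \|x\|_2 - \|x-y\|_2 \ge (1-1/\sqrt 2)\|x\|_2$, so $\|y\|_2^2 = \Omega(\|x\|_2^2)$. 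Combining these, $\rquo(y) = y^T L y / \|y\|_2^2 \le O(x^T L x / \|x\|_2^2) = O(\rquo(x))$.

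There is no real obstacle here — the calculation is genuinely short once the threshold is chosen. The only thing one needs a moment's care with is making sure that $y$ is not identically zero and that the various norms behave as expected on the nonnegative truncation; both are taken care of by the explicit choice of $t^*$ above, which ensures that a constant fraction of $\|x\|_2^2$ survives the thresholding while cutting off enough mass to force the support to be small. The main conceptual point is that the spectral sparsity condition is exactly what lets the Markov-style support bound beat a naive bound in terms of $\|x\|_2^2/t^{*2}$, allowing both properties to hold simultaneously at the same threshold.
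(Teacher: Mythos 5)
Your proof is correct and follows exactly the approach the paper indicates for this lemma (the paper only sketches it as ``choosing an appropriate threshold $t$ and setting $y = \max(x-t,0)$''): Markov on $\|x\|_1$ for the support, the $1$-Lipschitzness of truncation for the numerator, and the bound $\|x-y\|_2^2 \le t^*\|x\|_1 = \tfrac12\|x\|_2^2$ for the denominator. The explicit deterministic threshold $t^* = \|x\|_2^2/(2\|x\|_1)$ and the constants $2\delta n$ and $(1-1/\sqrt{2})^{-2}$ all check out.
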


We will apply Lemma~\ref{l:spectral_rounding} on $p_{s,t}$ and obtain a vector with small Rayleigh quotient (Lemma~\ref{l:small_rayleigh_quotient}) and small support (Lemma~\ref{l:spectrally_sparse}).

\subsection{Improved Cheeger's Guarantees} \label{s:walks-improved}

We are ready to prove Theorem~\ref{t:walks}.
In the following we assume $\phi(S) \le 1 / 4$ and $|S| \ge 2$.
We set $t = \epsilon \log |S| / (6 \phi(S))$ so that
\[
\left( 1 - \frac{3 \phi(S)}2 \right)^{2t}
\ge \exp(-3 \phi(S))^{2t}
= \exp(-6 t \phi(S))
= \exp(-\epsilon \log |S|)
= |S|^{-\epsilon}.
\]
By Lemma~\ref{l:spectrally_sparse}, we have
\[
\| p_{s,t} \|_1^2 \le 40000 |S|^{1 + \epsilon} \| p_{s,t} \|_2^2.
\]
On the other hand, since $|S| \ge 2$,
\[
(40000 |S|)^{-1 / (2t)}
\ge \exp \left( -\frac{17 \log |S|}{2t} \right)
= \exp \left( -\frac{51 \phi(S)}\epsilon \right)
\ge 1 - \frac{51 \phi(S)}\epsilon.
\]
Therefore, by Lemma~\ref{l:small_rayleigh_quotient}, we have
\[
\rquo(p_{s,t})
\le 2 - 2 (1 - \frac{3 \phi(S)}2)(1 - \frac{ 51 \phi(S)}{2 \epsilon})
= O \left( \frac{\phi(S)}{\epsilon} \right).
\]
Now, we apply Lemma~\ref{l:spectral_rounding} by plugging the vector $p_{s,t}$ for $x$ and obtain a vector $y$ with $\supp(y) \le O(|S|^{1 + \epsilon})$ and $\rquo(y) = O(\phi(S) / \epsilon)$.
Finally, by the proof of the improved Cheeger's inequality~(\ref{e:improved}) (see Section~\ref{s:appendix} in Appendix), we find a level set $S'$ with $|S'| \le |\supp(y)| = O(|S|^{1 + \epsilon})$ and
\[
\phi(S') = O \left( \frac{k \phi(S)}{\epsilon \phi_k} \right)
\quad {\rm or} \quad
\phi(S') = O \left( \frac{k \phi(S)}{\epsilon \sqrt{\lambda_k}} \right).
\]
Since a level set of $y$ is a level set of $p_{s,t}$, 
this proves the approximation guarantee of Theorem~\ref{t:walks}.

\subsection{Local Algorithm}

Computing the vector $p_{s,t} = W^t \chi_s$ exactly requires at least linear time.
In the following, we assume the graph is an unweighted $d$-regular graph (in our setting, the edge weight is either $1/d$ or $0$).
To obtain a local algorithm,
we can compute a good approximation to $p_{s,t}$ by repeatedly applying the operator $W$ (initially we compute $W \chi_s$) and truncating the small values to zero.
\begin{lemma}[\cite{spielman-teng, kwok-lau}]
\label{l:compute_approximate_vector}
Let $p_{s,t} = W^t \chi_s$ be the exact random walk vector starting at vertex $s$.
There is an algorithm that compute a vector $p'_{s,t}$ such that $p_{s,t} \ge p'_{s,t} \ge p_{s,t} - \alpha \vec{1}$ and $p'_{s,t} \ge 0$ in time $O(d t^2 / \alpha)$.
\end{lemma}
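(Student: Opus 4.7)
The plan is to compute $p'_{s,t}$ by running the power iteration $t$ times while truncating away small coordinates after each multiplication by $W$. Concretely, pick the threshold $\beta := \alpha/t$, set $\tilde p^{(0)} := \chi_s$, and for $i = 0, 1, \ldots, t-1$ define $\tilde p^{(i+1)}$ to be the vector obtained from $W \tilde p^{(i)}$ by zeroing out every coordinate strictly less than $\beta$. The output is $p'_{s,t} := \tilde p^{(t)}$.

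First I would verify the two sandwich inequalities. Write $p^{(i)} := W^i \chi_s$ for the exact iterate. Non-negativity and the upper bound $p'_{s,t} \le p_{s,t}$ both come by induction: $W$ has non-negative entries and truncation only decreases coordinates, so from $0 \le \tilde p^{(i)} \le p^{(i)}$ we get $0 \le \tilde p^{(i+1)} \le W \tilde p^{(i)} \le W p^{(i)} = p^{(i+1)}$. For the lower bound, let $e_i := p^{(i)} - \tilde p^{(i)} \ge 0$; a direct expansion yields $e_{i+1} = W e_i + \tau_i$, where $\tau_i := W \tilde p^{(i)} - \tilde p^{(i+1)}$ is the per-step truncation error. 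By construction $0 \le \tau_i \le \beta \vec{1}$ entrywise, because each coordinate is either untouched (contributing $0$) or truncated from a value that was below $\beta$. Since the graph is $d$-regular, the lazy walk operator $W$ is doubly stochastic and in particular $W \vec{1} = \vec{1}$, so a straightforward induction on $i$ gives $e_i \le i\beta\vec{1}$. At $i = t$ this reads $p_{s,t} - p'_{s,t} \le t\beta\vec{1} = \alpha\vec{1}$, exactly the required lower bound.

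For the running time, the key point is that $\supp(\tilde p^{(i)})$ stays small. Since $W$ and truncation are both $\ell_1$-norm nonincreasing on non-negative vectors and $\|\chi_s\|_1 = 1$, we have $\|\tilde p^{(i)}\|_1 \le 1$; every surviving coordinate is at least $\beta$, so $|\supp(\tilde p^{(i)})| \le 1/\beta = t/\alpha$. Computing $W \tilde p^{(i)}$ on an unweighted $d$-regular graph costs $O(d \cdot |\supp(\tilde p^{(i)})|) = O(dt/\alpha)$, and the subsequent thresholding is linear in the resulting support size. Summing over $t$ iterations gives the claimed $O(dt^2/\alpha)$ bound.

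The only real subtlety is controlling how truncation errors accumulate through the propagation: one might worry they compound multiplicatively, but the identity $W\vec{1} = \vec{1}$ is precisely what keeps the accumulation additive, so choosing $\beta = \alpha/t$ exactly compensates for the $t$ rounds of error and yields the entrywise bound $\alpha\vec{1}$. The rest is bookkeeping.
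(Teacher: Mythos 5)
Your proof is correct and follows exactly the approach the paper alludes to (the lemma is cited from Spielman--Teng and Kwok--Lau without a written proof, but the surrounding text describes precisely this ``apply $W$, then truncate small values'' iteration): additive error accumulation via $W\vec{1}=\vec{1}$ with threshold $\alpha/t$, and the support bound $t/\alpha$ from the $\ell_1$ norm giving the $O(dt^2/\alpha)$ running time. No gaps.
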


We set $t = \epsilon \log |S| / \phi(S)$ and $\alpha = \phi(S) / (160000 |S|^{1 + \epsilon})$, so that the time complexity of our local algorithm is $O(d \epsilon^2 |S|^{1 + \epsilon} \log^2 |S| / \phi(S)^3)$.
It remains to show that $p'_{s,t}$ is still spectrally sparse and has small Rayleigh quotient.

\begin{lemma}
\label{l:approximate_spectrally_sparse}
For $p_{s,t}$ that satisfies the conclusion in Lemma~\ref{l:spectrally_sparse}, we have
\[
\| p'_{s,t} \|_1^2 \le \frac1{80000 |S|^{1 + \epsilon}} \| p'_{s,t} \|_2^2.
\]
\end{lemma}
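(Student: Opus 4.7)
The plan is to transfer the spectral sparsity of $p_{s,t}$ established in Lemma~\ref{l:spectrally_sparse} to the truncated vector $p'_{s,t}$, losing only a constant factor. The key tool is Lemma~\ref{l:compute_approximate_vector}, which provides the coordinate-wise sandwich $0 \le p'_{s,t} \le p_{s,t}$ with $\|p_{s,t} - p'_{s,t}\|_\infty \le \alpha$. This lets me control $\|p'_{s,t}\|_1$ by monotonicity and $\|p'_{s,t}\|_2$ by a perturbation argument, provided $\alpha$ is small enough compared to the $\ell_2$ mass that the random walk has concentrated inside $S$.

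Concretely, I would proceed in three steps. First, from $0 \le p'_{s,t} \le p_{s,t}$ and the fact that $p_{s,t}$ is a probability distribution, I immediately get $\|p'_{s,t}\|_1 \le \|p_{s,t}\|_1 = 1$. Second, writing $e := p_{s,t} - p'_{s,t}$ with $0 \le e_i \le \alpha$, the pointwise inequality $e_i^2 \le \alpha\, e_i$ gives $\|e\|_2^2 \le \alpha \|e\|_1 \le \alpha \|p_{s,t}\|_1 = \alpha$; the triangle inequality in $\ell_2$ then yields $\|p'_{s,t}\|_2 \ge \|p_{s,t}\|_2 - \sqrt{\alpha}$. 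Third, I plug in the lower bound on $\|p_{s,t}\|_2$ coming from Lemma~\ref{l:spectrally_sparse}: after substituting $t = \epsilon \log|S|/\phi(S)$ and using $1 - y \ge e^{-2y}$ for $y \le 1/2$, the lemma yields $\|p_{s,t}\|_2^2 \ge 1/(40000\, |S|^{1+\epsilon})$ up to an absorbable constant in the exponent. The choice $\alpha = \phi(S)/(160000\,|S|^{1+\epsilon})$ together with $\phi(S) \le 1/4$ then forces $\sqrt{\alpha}$ to be at most a quarter of $\|p_{s,t}\|_2$, so $\|p'_{s,t}\|_2 \ge \frac{3}{4}\|p_{s,t}\|_2$ and $\|p'_{s,t}\|_2^2 \ge \frac{9}{16}\|p_{s,t}\|_2^2 \ge 9/(640000\,|S|^{1+\epsilon})$. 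Dividing $\|p'_{s,t}\|_1^2 \le 1$ by this lower bound gives the claimed sparsity ratio up to the advertised constant $80000$.

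The only delicate piece is the parameter matching in the third step: the truncation threshold $\alpha$ has to be polynomially smaller in $|S|$ than $\|p_{s,t}\|_2^2$, and this is exactly the quantitative reason for choosing $\alpha$ of the form $\phi(S)/(160000\,|S|^{1+\epsilon})$. Once this check is done, the rest is a one-line consequence of the sandwich $p_{s,t} \ge p'_{s,t} \ge p_{s,t} - \alpha \vec{1}$ and the standard inequality $\|e\|_2^2 \le \|e\|_\infty \|e\|_1$ for nonnegative vectors, so I expect no other obstacle.
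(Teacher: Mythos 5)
Your proof is correct and follows essentially the same architecture as the paper's: bound $\|p'_{s,t}\|_1 \le \|p_{s,t}\|_1 = 1$ using the sandwich $0 \le p'_{s,t} \le p_{s,t}$, invoke $\|p_{s,t}\|_2^2 \ge 1/(40000\,|S|^{1+\epsilon})$ from Lemma~\ref{l:spectrally_sparse}, and then show the truncation costs only a constant fraction of the $\ell_2$ mass, with the parameter choice $\alpha = \phi(S)/(160000\,|S|^{1+\epsilon})$ doing exactly the work you describe. (You have also correctly read through a typo in the lemma statement: the factor $|S|^{1+\epsilon}$ belongs in the numerator, as the paper's own proof and Lemma~\ref{l:spectrally_sparse} make clear.)

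The one place where your route differs is the comparison between $\|p'_{s,t}\|_2$ and $\|p_{s,t}\|_2$. The paper argues pointwise: since $p'_{s,t}(i) \ge \max(p_{s,t}(i) - \alpha, 0)$, one gets $p'_{s,t}(i)^2 \ge p_{s,t}(i)^2 - 2\alpha\, p_{s,t}(i)$, and summing (with $\sum_i p_{s,t}(i) = 1$) yields the additive bound $\|p'_{s,t}\|_2^2 \ge \|p_{s,t}\|_2^2 - 2\alpha$, which the choice of $\alpha$ converts to the multiplicative factor $1 - \phi(S)/2$. You instead introduce the error vector $e = p_{s,t} - p'_{s,t}$, bound $\|e\|_2^2 \le \alpha\|e\|_1 \le \alpha$ from $0 \le e_i \le \alpha$, and pass through the $\ell_2$ triangle inequality $\|p'_{s,t}\|_2 \ge \|p_{s,t}\|_2 - \sqrt{\alpha}$, then square. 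This detour through $\sqrt{\alpha}$ is a hair less tight on constants, but both fit well within the stated $80000$, and both hinge on the same parameter comparison $\alpha \lesssim \phi(S)\|p_{s,t}\|_2^2$. Morally the two are the same elementary perturbation argument; the paper's pointwise version is a touch more direct because it never needs to take a square root.
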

\begin{proof}
In the proof, we let $x := p_{s,t}$ and $y := p'_{s,t}$.
By Lemma~\ref{l:compute_approximate_vector}, we have $y(i)^2 \ge x(i)^2 - 2 \alpha x(i)$ since $y(i) \ge \max(x(i) - \alpha, 0)$.
Therefore,
\[
\| y \|_2^2
= \sum_i y(i)^2
\ge \sum_i x(i)^2 - 2 \alpha \sum_i x(i)
= \| x \|_2^2 - 2 \alpha.
\]
By Lemma~\ref{l:spectrally_sparse}, we have $\| x \|_2^2 \ge 1 / (40000 |S|^{1 + \epsilon})$.
From our choice of $\alpha$, we have $2 \alpha = \phi(S) / (80000 |S|^{1 + \epsilon}) \le \phi(S) \| x \|_2^2 / 2$.
Therefore,
\[
\| x \|_2^2 - 2 \alpha
\ge \| x \|_2^2 \left( 1 - \frac{\phi(S)}2 \right)
\ge \frac1{80000 |S|^{1 + \epsilon}}
\ge \frac1{80000 |S|^{1 + \epsilon}} \| y \|_1^2,
\]
where the last inequality holds as $\| y \|_1^2 \le \| x \|_1^2 = 1$.
\end{proof}

\begin{lemma} \label{l:truncated-rayleigh}
\[
\rquo(p'_{s,t})
\le O(\frac{\phi(S)}{\epsilon}).
\]
\end{lemma}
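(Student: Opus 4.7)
The plan is to bound $\rquo(p'_{s,t})$ by relating it to $\rquo(p_{s,t})$, which was already shown to be $O(\phi(S)/\eps)$ in \autoref{s:walks-improved}. I will write $p'_{s,t} = p_{s,t} - q$ where, by \autoref{l:compute_approximate_vector}, the vector $q$ is coordinatewise in $[0,\alpha]$. Since both $p_{s,t}$ and $p'_{s,t}$ are non-negative and $p'_{s,t} \le p_{s,t}$, one has $\|q\|_1 = \|p_{s,t}\|_1 - \|p'_{s,t}\|_1 \le 1$, and in particular $\|q\|_2^2 \le \|q\|_\infty \|q\|_1 \le \alpha$.

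For the numerator of the Rayleigh quotient, I would apply the triangle inequality in the seminorm $\|y\|_L := (y^T L y)^{1/2}$ to obtain
\[
(p'_{s,t})^T L\, p'_{s,t} \;\le\; 2\, p_{s,t}^T L\, p_{s,t} \;+\; 2\, q^T L q.
\]
Since $L \preceq 2I$, the error term is bounded by $q^T L q \le 2\|q\|_2^2 \le 2\alpha$. For the denominator, the proof of \autoref{l:approximate_spectrally_sparse} already showed $\|p'_{s,t}\|_2^2 \ge (1-\phi(S)/2)\|p_{s,t}\|_2^2 \ge \tfrac{1}{2}\|p_{s,t}\|_2^2$ (using $\phi(S)\le 1/4$), and \autoref{l:spectrally_sparse} combined with the choice of $t$ gives $\|p_{s,t}\|_2^2 \ge 1/(40000\,|S|^{1+\eps})$.

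Combining these three ingredients,
\[
\rquo(p'_{s,t}) \;\le\; 4\,\rquo(p_{s,t}) \;+\; \frac{8\alpha}{\|p_{s,t}\|_2^2} \;=\; O\!\left(\frac{\phi(S)}{\eps}\right) + O\!\bigl(\alpha \cdot |S|^{1+\eps}\bigr),
\]
and substituting $\alpha = \phi(S)/(160000\,|S|^{1+\eps})$ absorbs the second term into $O(\phi(S))$, yielding the claimed $O(\phi(S)/\eps)$ bound. The only delicate point is that the crude bound $L \preceq 2I$ on the perturbation term $q^T L q$ is tight enough only because $\alpha$ was chosen so that $\alpha|S|^{1+\eps} = O(\phi(S))$; any larger choice of $\alpha$ would break the argument, so the main thing to verify is that this calibration of $\alpha$ with the spectral sparsity $|S|^{1+\eps}$ is consistent throughout.
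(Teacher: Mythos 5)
Your proof is correct and follows essentially the same route as the paper's: decompose $p'_{s,t} = p_{s,t} - q$ with $q \ge 0$, use the $L$-seminorm triangle inequality to get $(p')^T L p' \le 2\,p^T L p + 2\,q^T L q$, bound the error term via $L \preceq 2I$, and compare denominators using $\|p'\|_2^2 \ge (1-\phi(S)/2)\|p\|_2^2$. The only cosmetic difference is the intermediate bound on $\|q\|_2^2$: you use $\|q\|_2^2 \le \|q\|_\infty\|q\|_1 \le \alpha$ and then divide by the spectral-sparsity lower bound on $\|p\|_2^2$, whereas the paper obtains $\|q\|_2^2 \le \|p\|_2^2 - \|p'\|_2^2 \le \tfrac{1}{2}\phi(S)\|p\|_2^2$ directly from the nonnegativity of $p'$ and $q$, so that the error term normalizes to $O(\phi(S))$ without reinvoking the choice of $\alpha$ in the final step; both yield the same bound.
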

\begin{proof}
Again, we let $x := p_{s,t}$ and $y := p'_{s,t}$ in the proof.
Let $r = x - y \ge 0$.
Then we have
\[
\rquo(y)
= \frac{y^T L y}{y^T y}
= \frac{(x - r)^T L (x - r)}{y^T y}
= \frac{x^T L x + r^T L r - 2 x^T L r}{y^T y}
\le \frac{2 x^T L x + 2 r^T L r}{y^T y}.
\]
By the calculation in Lemma~\ref{l:approximate_spectrally_sparse}, $\| y \|_2^2 \ge (1 - \phi(S) / 2) \| x \|_2^2$.  
Hence, using $r \ge 0$ and $y \ge 0$, we have $\| r \|_2^2 \le \| x \|_2^2 - \| y \|_2^2 \le \phi(S) \| x \|_2^2 / 2$ and $r^T L r \le 2 r^T r \le \phi(S) \| x \|_2^2$.
So, we have
\[
\rquo(y)
= O \left( \frac{x^T L x}{y^T y} + \frac{r^T L r}{y^T y} \right)
= O \left( \frac{x^T L x}{x^T x} + \frac{r^T L r}{x^T x} \right)
= O (\rquo(x) + \phi(S))
= O \left( \frac{\phi(S)}\epsilon \right).
\]
\end{proof}

With Lemma~\ref{l:approximate_spectrally_sparse} and Lemma~\ref{l:truncated-rayleigh}, we can use the same proof in Section~\ref{s:walks-improved} to prove Theorem~\ref{t:walks} with the time complexity claimed.

To prove Theorem~\ref{t:walks}(2), we only need to set $\epsilon = 1 / \log(|S|)$ so that $|S|^{1 + \epsilon} = O(|S|)$.

\subsection{Local Eigenvalue}

We remark that if we do not care about local implementations, we can find a particular good starting vertex $u$ such that the random walk algorithm starting at $u$ gives a better performance guarantee $\phi(S') = O(k \lambda_S / (\epsilon \phi_k))$,
where $\lambda_S$ is the smallest eigenvalue of the matrix $L_S$ which is the restriction of $L$ on the subset $S$.
Chung~\cite{Chung07} shows the following local Cheeger's inequality:
\[
\lambda_S \le \min_{T \subseteq S} \phi(T) \le \sqrt{2 \lambda_S}.
\]
Hence $\lambda_S$ is at most $\phi(S)$ and could be much smaller,
for instance when a subset of $S$ has very small expansion.
The idea is similar to that in~\cite{kwok-lau} and we just give a quick sketch.
Let $v_S$ be the corresponding eigenvector with eigenvalue $\lambda_S$.
We choose our starting vertex to be $u = \argmax_i |v_S(i)|$.
Then we show that the spectral sparsity of the $t$-steps random walk is at most $|S| / (1 - \lambda_S)^{2t} < |S| / (1 - O(\phi(S)))^{2t}$.
This allows us to set $t$ to be larger so as to improve the Rayleigh quotient of the random walk vector.

\section*{Acknowledgement}

Part of the work was done while we were long-term participants in the Algorithmic Spectral Graph Theory program at the Simons Institute for the Theory of Computing in Fall 2014.
We thank the organizers for the support and the wonderful research environment.
We also thank David Steurer for suggesting the spectral approach to analyze random walks, and Luca Trevisan for pointing out that Theorem~\ref{t:k-way} can be derived from the proof of the improved Cheeger's inequality in~\cite{improved-cheeger}.

\bibliographystyle{plain}

\appendix

\section{Relations between Improved Cheeger's Inequality and Theorem~\ref{t:k-way}} \label{s:appendix}

First, we show that Theorem~\ref{t:k-way} can be derived from the proof of the improved Cheeger's inequality, as pointed out to us by Luca Trevisan.
Then, we show that the improved Cheeger's inequality can be derived from Theorem~\ref{t:k-way}, using a graph powering trick as described in~\cite{powers}.

\subsection{Improved Cheeger's inequality implies Theorem~\ref{t:k-way}}

The following stronger statement was shown in~\cite{improved-cheeger}.

\begin{theorem}[Theorem 3.5 of \cite{improved-cheeger}, restated] \label{t:improved}
For any non-negative vector $x$ with $\supp(x) \le n / 2$, let $\phisw(x)$ be the minimum expansion of the level sets of $x$.
At least one of the following holds:
\begin{enumerate}
\item $\phisw(x) \le O(k) \rquo(x)$.
\item There exists $k$ disjointly supported vectors $x_1, \dots, x_k$ such that for all $1 \le i \le k$, $\supp(x_i) \subseteq \supp(x)$ and $\rquo(x_i) \le O(k^2 \rquo(x)^2 / \phisw(x)^2)$.
\end{enumerate}
\end{theorem}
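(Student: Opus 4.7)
The plan is to prove the contrapositive: assume case~(1) fails, so every sweep cut of $x$ has expansion strictly greater than $C k \rquo(x)$ for a sufficiently large absolute constant $C$, and from this construct the $k$ disjointly supported vectors promised by case~(2). Sort the vertices so that $x_1 \ge x_2 \ge \dots \ge x_n$ and consider the level sets $L_t := \{v : x_v > t\}$. The failure of case~(1) combined with $\supp(x) \le n/2$ gives $w(L_t, \overline{L_t}) \ge \phisw(x)\, |L_t|$ for every $t > 0$.

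The main idea is to partition the range of values of $x$ into $k$ consecutive layers $V_i := \{v : x_v \in [s_{i-1}, s_i)\}$ using thresholds $0 = s_0 < s_1 < \dots < s_k$, and to set
\[
y_i(v) \;:=\; (x_v - s_{i-1}) \cdot \mathbf{1}[v \in V_i].
\]
By construction the $V_i$ are pairwise disjoint and contained in $\supp(x)$, so the $y_i$ are disjointly supported vectors with $\supp(y_i) \subseteq \supp(x)$. I would choose the thresholds by a quantile argument along the sorted $x$-values so that the $\ell_2$-mass of each layer is balanced, i.e.\ $\|y_i\|^2 = \Theta(\|x\|^2 / k)$.

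The heart of the proof is bounding $\rquo(y_i)$ for each $i$. Split
\[
y_i^T L y_i \;=\; \sum_{\substack{uv \in E\\ u,v \in V_i}} (x_u - x_v)^2 \;+\; \sum_{\substack{uv \in E\\ u \in V_i,\, v \notin V_i}} (x_u - s_{i-1})^2.
\]
Summing the interior terms over $i$ telescopes into at most $\sum_{uv \in E}(x_u - x_v)^2 = \rquo(x)\|x\|^2$. Each boundary term is at most $(s_i - s_{i-1})^2 \cdot w(V_i, \overline{V_i})$, and the cut $w(V_i, \overline{V_i})$ is dominated by $w(L_{s_{i-1}}, \overline{L_{s_{i-1}}}) + w(L_{s_i}, \overline{L_{s_i}})$. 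Using the hypothesis on $\phisw(x)$ in the reverse direction (the fact that the cut cannot be too small because $\phisw$ is large is the content of case~(1) failing), a co-area style accounting over all thresholds, together with the balancing $\|y_i\|^2 = \Theta(\|x\|^2/k)$, should collapse everything into the bound $\rquo(y_i) = O(k^2 \rquo(x)^2 / \phisw(x)^2)$ for every $i$ simultaneously.

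The main obstacle I anticipate is calibrating the gaps $s_i - s_{i-1}$ so that the mass condition ($\|y_i\|^2$ large) and the cut condition (total boundary contribution small) hold \emph{simultaneously} for all $k$ layers. A naive equispaced choice fails because the $x$-values are highly non-uniform; instead, I would either choose thresholds by a randomized shift over a dyadic scale and invoke averaging, or carve off layers iteratively while maintaining a spectral invariant tying $\|y_i\|^2$ to the numerator contribution. The quantitative form of the target bound, $O(k^2 \rquo(x)^2/\phisw(x)^2)$, suggests that the natural gap scale is of order $\rquo(x)/\phisw(x)$ weighted by the local density of $x$-values, and the delicate step is to verify that the assumption $\phisw(x) > Ck\,\rquo(x)$ is strong enough to absorb both the factor of $k$ from balancing and the squaring in the boundary term.
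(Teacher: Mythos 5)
This theorem is not proved in the present paper --- it is imported from \cite{improved-cheeger} --- so your attempt has to be measured against the proof there, and it has two genuine gaps. First, your piece construction $y_i(v)=(x_v-s_{i-1})\cdot\mathbf{1}[v\in V_i]$ is one-sided, and the boundary energy it creates is not controlled: an edge $uv$ with $x_u$ just below $s_i$ and $x_v$ just above $s_i$ contributes $(x_u-s_{i-1})^2\approx(s_i-s_{i-1})^2$ to $y_i^TLy_i$ while contributing essentially nothing to $x^TLx$. Your plan to absorb this via the term $(s_i-s_{i-1})^2\,w(V_i,\overline{V_i})$ ``using the hypothesis on $\phisw(x)$ in the reverse direction'' goes the wrong way: the failure of case (1) gives a \emph{lower} bound on the cuts of level sets, which makes the boundary term larger, not smaller. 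The proof in \cite{improved-cheeger} avoids this by defining each piece as the distance to the \emph{nearest} threshold, $y_i(v)=\min(x_v-s_{i-1},\,s_i-x_v)$ on $V_i$; then $y_i(u)\le|x_u-x_v|$ whenever $v\notin V_i$, the pieces remain disjointly supported, and one gets $\sum_i y_i^TLy_i\le 2\,x^TLx$ unconditionally, with no cut bound needed.

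Second, the balancing $\|y_i\|^2=\Theta(\|x\|^2/k)$ you want is simply not achievable: if $x$ is close to a step function with few distinct values (e.g.\ near an indicator vector), then $\sum_i\|y_i\|^2$ is tiny for every choice of thresholds, and no randomized dyadic shift fixes this. This degenerate situation is exactly the one in which case (1) must hold, and the missing ingredient that makes the dichotomy close is the key lemma of \cite{improved-cheeger}: if $x$ admits a $2k$-step approximation $y$ with $\|x-y\|^2\le\delta\|x\|^2$, then $\phisw(x)\le O\bigl(k\,\rquo(x)+k\sqrt{\delta\,\rquo(x)}\bigr)$. Its contrapositive is what converts ``case (1) fails'' into the quantitative statement that every $2k$-step approximation has error $\delta=\Omega(\phisw(x)^2/(k^2\rquo(x)))$, so one can carve off $2k$ disjoint pieces each with $\|y_i\|^2\ge\Omega(\delta\|x\|^2/k)$, and a Markov argument on the numerators then yields $k$ of them with $\rquo(y_i)=O(\rquo(x)/\delta)=O(k^2\rquo(x)^2/\phisw(x)^2)$ --- note this is where the $k^2\rquo^2/\phisw^2$ in case (2) actually comes from, whereas your balancing would (impossibly) give the much stronger bound $O(\rquo(x))$. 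Your ``main obstacle'' paragraph correctly locates the difficulty in the threshold calibration, but the proposal contains no substitute for this approximation lemma, and without it the contrapositive cannot be completed.
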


We apply the theorem with $x = \max(v_2, 0)$ or $x = \max(-v_2, 0)$, whichever of smaller support.
Note that $\rquo(x) \le \lambda_2$ by standard argument~\cite{hoory-linial-wigderson}.
When the first case of Theorem~\ref{t:improved} holds, it is clear that
\[
\phi(G)
\le \phisw(x)
\le O(k) \rquo(x)
\le O(k \lambda_2)
\le O(\frac{k \lambda_2}{\phi_k}).
\]
Otherwise, there exist $k$ disjointly supported vectors, each with Rayleigh quotient not larger than $O(k^2 \lambda_2^2 / \phisw^2(x))$.
Apply (the original) Cheeger's arguments on these vectors, we can find $k$ disjoint subsets $S_i$, each of them satisfies $\phi(S_i) \le O(k \lambda_2 / \phisw(x))$.
This implies that
\[
\phi_k
\le O(\frac{k \lambda_2}{\phisw(x)}),
\quad {\rm or } \quad
\lambda_2
= \Omega(\frac{\phi_k \phisw(x)}k)
= \Omega(\frac{\phi_k \phi(G)}k),
\]
and Theorem~\ref{t:k-way} follows.

\subsection{Theorem~\ref{t:k-way} implies improved Cheeger's inequality}

In \cite{powers}, the authors proved a lower bound on the expansion of graph powers and used it to show some reductions on Cheeger's inequalities.
We show that the same approach can be used to prove improved Cheeger's inequality by Theorem~\ref{t:k-way}.

\begin{theorem}[Theorem 1 of \cite{powers}, restated]
\label{t:sqrtt}
Let $H$ denote the graph with adjacency matrix $W^t$ where $W$ is the lazy random walk matrix of $G$.
Then we have
\[
\phi(H)
\ge \frac1{20}(1 - (1 - \frac{\phi(G)}2)^{\sqrt t}).
\]
\end{theorem}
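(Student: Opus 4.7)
The plan is to analyse $\phi(H)$ spectrally and then invoke Cheeger's inequality on $G$ together with a monotonicity observation to bring the exponent down from $t$ to $\sqrt{t}$.

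I would start with the basic spectral bound. For any $S\subseteq V$ with $|S|\le n/2$, set $f:=\chi_S/\sqrt{|S|}$. Since $W^t$ is symmetric and doubly stochastic, $\phi_H(S)=1-f^{\top}W^tf$. Let $v_1,\dots,v_n$ be an orthonormal eigenbasis of $L$, so $Wv_i=(1-\lambda_i/2)v_i$ and $v_1=\mathbf 1/\sqrt n$. Expanding $f=\sum_i\alpha_iv_i$ gives $\alpha_1^2=|S|/n\le 1/2$, and for the lazy walk every $1-\lambda_i/2\in[0,1]$, so $(1-\lambda_i/2)^t\le(1-\lambda_2/2)^t$ for $i\ge 2$. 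Together these yield $\phi_H(S)\ge\tfrac{1}{2}\bigl(1-(1-\lambda_2(G)/2)^t\bigr)$.

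Next I would reduce the exponent from $t$ to $\sqrt t$. The same spectral computation shows that $t\mapsto f^{\top}W^tf$ is non-increasing in $t$, so $\phi(G^t)$ is non-decreasing; in particular $\phi(H)\ge\phi(G^{\sqrt t})$, which replaces $t$ by $\sqrt t$ in the exponent of the spectral bound. Applying the easy direction of Cheeger to $G$, $\lambda_2(G)\ge\phi(G)^2/2$, produces
\[
  \phi(H) \;\ge\; \tfrac{1}{2}\Bigl(1-\bigl(1-\phi(G)^2/4\bigr)^{\sqrt t}\Bigr).
\]
The same monotonicity also yields the direct bound $\phi(H)\ge\phi(G^1)=\phi(G)/2$, since $W$'s off-diagonal entries are exactly half those of the adjacency of $G$.

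The main obstacle is the final arithmetic: showing that the maximum of these two lower bounds dominates $\tfrac{1}{20}\bigl(1-(1-\phi(G)/2)^{\sqrt t}\bigr)$ uniformly in $t$ and $\phi(G)$. I would resolve this by a case split on the size of $\sqrt t\,\phi(G)$. When it is small, Bernoulli's inequality linearises the target to at most $\sqrt t\,\phi(G)/40$ and the direct bound $\phi(G)/2$ suffices whenever $\sqrt t\le 20$; when it exceeds a suitable absolute constant, the factorisation $1-\phi^2/4=(1-\phi/2)(1+\phi/2)$ combined with the elementary inequality $1-e^{-x}\ge\min(x/2,1/2)$ lets the spectral bound cover the target, and the generous constant $1/20$ provides enough slack to patch the two regimes together.
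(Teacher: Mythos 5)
Your proposal has a genuine gap in the final step, and it is not one that more careful arithmetic can close. The two lower bounds you produce are $\phi(H)\ge\phi(G)/2$ (direct) and $\phi(H)\ge\tfrac12\bigl(1-(1-\phi(G)^2/4)^{\sqrt t}\bigr)$ (spectral via Cheeger), and their maximum does \emph{not} dominate $\tfrac1{20}\bigl(1-(1-\phi(G)/2)^{\sqrt t}\bigr)$. Your case split misses the regime where $\sqrt t\,\phi(G)\gg 1$ but $\sqrt t\,\phi(G)^2\ll 1$. Concretely, take $\phi(G)=10^{-3}$ and $\sqrt t=10^4$: the target is $\tfrac1{20}(1-(1-5\cdot 10^{-4})^{10^4})\approx\tfrac1{20}(1-e^{-5})\approx 0.05$, a constant, while the direct bound is $5\cdot 10^{-4}$ and the spectral bound is $\approx\tfrac12(1-e^{-\sqrt t\,\phi^2/4})\approx 1.25\cdot 10^{-3}$. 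In this regime both of your bounds are $O(\phi(G))$ whereas the target is bounded below by an absolute constant, so the ratio is unbounded and no choice of leading constant ($1/20$ or otherwise) can patch it. This is not an accident: passing through $\lambda_2\ge\phi(G)^2/2$ (which, incidentally, is the \emph{hard} direction of Cheeger, not the easy one) inevitably yields a bound of order $\min(1,\sqrt t\,\phi^2)$, which is quadratically weaker than the claimed $\min(1,\sqrt t\,\phi)$ when $\sqrt t\,\phi\le 1$. Indeed, if the theorem followed from Cheeger in this way, the reduction in Corollary~\ref{c:improved} would collapse back to plain Cheeger's inequality rather than yielding the improved bound with $\sqrt{\lambda_k}$ — a sanity check that any purely spectral derivation of this form must fail.

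For comparison: the paper does not prove this statement at all; it imports it verbatim from \cite{powers}. The actual proof there is combinatorial, based on the Lov\'asz--Simonovits curve technique: one tracks the concentration function of the $t$-step walk and shows that the chord construction spreads diffusively, so that after $t$ steps the relevant quantity has moved by $\Theta(\sqrt t\,\phi)$ on a logarithmic scale. The $\sqrt t$ in the exponent comes from this diffusive spreading, not from eigenvalue decay, which is precisely why it cannot be recovered by chaining the two directions of Cheeger's inequality as you attempt. Your preliminary steps (the identity $\phi_H(S)=1-f^{\top}W^tf$, the bound $\phi(H)\ge\tfrac12(1-(1-\lambda_2/2)^t)$, and the monotonicity of $\phi(G^t)$ in $t$) are all correct and are useful facts, but they are not strong enough to reach the stated conclusion.
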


The following corollary is a generalization of Corollary 12 of \cite{powers}, which shows that general cases of improved Cheeger's inequality can be reduce to the cases where $\lambda_k$ is constant.

\begin{corollary}
\label{c:improved}
Suppose one could prove that $\phi(H) \le C \lambda_2(H)$ for some $C \ge 1 / 10$ whenever $\lambda_k(H) \ge 1 / 4$, then it implies that $\phi(G) \le 40 C \lambda_2(G) / \sqrt{\lambda_k(G)}$ for any $G$ and any $\lambda_k(G)$.
\end{corollary}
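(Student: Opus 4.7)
The plan is to lift the constant-$\lambda_k$ hypothesis to arbitrary $G$ via the graph powering technique of Theorem~\ref{t:sqrtt}. Let $t = \lceil c / \lambda_k(G) \rceil$ for a constant $c$ chosen in the range $[2\log(4/3),\, 1]$, and let $H$ be the graph with adjacency matrix $W^t$, so that $L(H) = I - W^t$ has eigenvalues $\lambda_i(H) = 1 - (1 - \lambda_i(G)/2)^t$.

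First, I would verify that the hypothesis is applicable to $H$. Using $(1-x)^t \le e^{-tx}$, we get $\lambda_k(H) \ge 1 - e^{-c/2} \ge 1/4$ by the lower bound on $c$, so the hypothesis gives $\phi(H) \le C \lambda_2(H)$. Bernoulli's inequality $(1-x)^t \ge 1 - tx$ then produces $\lambda_2(H) \le t \lambda_2(G) / 2$, which combines to
\[
\phi(H) \le \frac{C t \lambda_2(G)}{2}.
\]

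Second, I would translate this bound on $\phi(H)$ back to one on $\phi(G)$ using Theorem~\ref{t:sqrtt}, which yields $\phi(H) \ge \tfrac{1}{20}\bigl(1 - (1 - \phi(G)/2)^{\sqrt{t}}\bigr) \ge \tfrac{1}{20}\bigl(1 - e^{-\sqrt{t}\phi(G)/2}\bigr)$. Chaining the two estimates gives
\[
1 - e^{-\sqrt{t}\phi(G)/2} \le 10 C t \lambda_2(G),
\]
and inverting via $-\log(1 - y) \le 2y$ (valid for $y \le 1/2$) produces $\sqrt{t}\phi(G)/2 \le 20 C t \lambda_2(G)$, i.e.\ $\phi(G) \le 40 C \sqrt{t}\, \lambda_2(G)$. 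Because $c \le 1$, we have $\sqrt{t} \le 1 / \sqrt{\lambda_k(G)}$ (up to a routine ceiling check), giving the desired inequality $\phi(G) \le 40 C \lambda_2(G) / \sqrt{\lambda_k(G)}$.

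The main obstacle will be managing the numerical constants and boundary cases: (i) when $\lambda_k(G) \ge 1/4$ already, the hypothesis applies to $G$ directly and the conclusion follows since $\sqrt{\lambda_k(G)} \ge 1/2$; (ii) when $10 C t \lambda_2(G) > 1/2$ so that the linearization of $-\log(1-y)$ fails, one can check that the target right-hand side already exceeds $1$ and the inequality is trivial from $\phi(G) \le 1$; and (iii) one must verify that the choice of $c$ is simultaneously large enough to force $\lambda_k(H) \ge 1/4$ and small enough that $\sqrt{t}$ does not inflate the constant past $40$. The assumption $C \ge 1/10$ is precisely what makes these boundary regimes slot inside the stated conclusion, enabling a clean case split.
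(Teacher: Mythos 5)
Your argument follows the same route as the paper's: set $t \approx 1/\lambda_k(G)$, apply the hypothesis to $H$ with adjacency matrix $W^t$, and translate back via Theorem~\ref{t:sqrtt}. The main chain ($\phi(H) \le C\lambda_2(H) \le Ct\lambda_2(G)/2$ by Bernoulli, then inverting $1 - e^{-\sqrt t\,\phi(G)/2} \le 10Ct\lambda_2(G)$) is correct and matches the paper up to cosmetic differences (flexible $c$ and a ceiling).

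The gap is in your case (ii). You assert that when $y := 10Ct\lambda_2(G) > 1/2$, ``the target right-hand side already exceeds $1$,'' so the conclusion follows from $\phi(G) \le 1$. This is false. With $t \approx c/\lambda_k$, the condition $y > 1/2$ only guarantees $\lambda_2 > \lambda_k/(20Cc)$, which yields $40C\lambda_2/\sqrt{\lambda_k} > 2\sqrt{\lambda_k}/c$; this is below $1$ whenever $\lambda_k < c^2/4$. Concretely, take $C = 1/10$, $c = 0.6$, $\lambda_k = 0.01$, $\lambda_2 = 0.009$: then $y > 1/2$ but $40C\lambda_2/\sqrt{\lambda_k} = 0.36 < 1$, so your dismissal of this case does not go through. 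The paper sidesteps this by opening with a case split via Cheeger's inequality: if $\phi(G) > \sqrt{\lambda_k(G)}$, then $2\lambda_2(G) \ge \phi(G)^2 \ge \phi(G)\sqrt{\lambda_k(G)}$ and the conclusion is immediate; otherwise the exponent $\phi(G)/(2\sqrt{\lambda_k})$ is at most $1/2$ and the linearization never fails. You can repair your case (ii) the same way: replace ``$\phi(G) \le 1$'' by Cheeger's $\phi(G) \le \sqrt{2\lambda_2(G)}$, which reduces the target to $\lambda_2 \ge \lambda_k/(800C^2)$; since $800C^2 \ge 80C > 20Cc$ when $C \ge 1/10$ and $c \le 1$, this follows from $\lambda_2 > \lambda_k/(20Cc)$, i.e.\ from the very hypothesis that put you in case (ii).

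Two smaller notes. First, the ``routine ceiling check'' for $\sqrt{t} \le 1/\sqrt{\lambda_k}$ is not free for $c$ close to $1$; it does hold for $c = 2\log(4/3)$ once you restrict to $\lambda_k < 1/4$ (case (i) covers the rest), so you should fix $c$ to the lower end of your interval rather than leave it floating. Second, the paper uses $t = 1/\lambda_k(G)$ without a ceiling, which implicitly invokes $W^t$ for non-integer $t$; your ceiling keeps $t$ an integer and $W^t$ a genuine adjacency matrix, which is cleaner if Theorem~\ref{t:sqrtt} is read strictly as a statement about integer graph powers.
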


\begin{proof}
We assume that $\phi(G) \le \sqrt{\lambda_k}$, as otherwise, by Cheeger's inequality, $2 \lambda_2(G) \ge \phi(G)^2 \ge \phi(G) \sqrt{\lambda_k}$ and the statement is true.
Consider $H$ with adjacency matrix $W^{1 / \lambda_k(G)}$.
Then
\[
\lambda_k(H)
= 1 - (1 - \frac{\lambda_k(G)}2)^{1 / \lambda_k}
\ge 1 - \exp(-\frac12)
\ge 1 / 4.
\]
Therefore, if one could prove that $\phi(H) \le C \lambda_2(H)$, then
\[
C \lambda_2(H)
\ge \phi(H)
\ge \frac1{20} (1 - (1 - \frac{\phi(G)}2)^{\sqrt{1 / \lambda_k(G)}})
\ge \frac1{20} (1 - \exp(-\frac{\phi(G)}{2 \sqrt{\lambda_k(G)}}))
\ge \frac{\phi(G)}{80\sqrt{\lambda_k(G)}},
\]
where the second inequality is by Theorem~\ref{t:sqrtt}.
On the other hand,
\[
\lambda_2(H)
= 1 - (1 - \frac{\lambda_2(G)}2)^{1 / \lambda_k(G)}
\le \frac{\lambda_2(G)}{2 \lambda_k(G)},
\]
and the corollary follows by combining the two inequalities.
\end{proof}

Now we show the improved Cheeger's inequality in~\cite{improved-cheeger} follows from Corollary~\ref{c:improved} and Theorem~\ref{t:k-way}.
By the easy side of the higher order Cheeger's inequality, 
we have $\phi_k \ge \lambda_k/2$.
Hence, for any graph $G$ with $\lambda_k \ge 1 / 4$, we have $\phi_k \ge 1 / 8$ and Theorem~\ref{t:k-way} gives $\phi(G) = O(k \lambda_2(G))$.
Therefore, we can apply Corollary~\ref{c:improved} (with $C = O(k)$) and conclude that $\phi(G) = O(k \lambda_2(G) / \sqrt{\lambda_k(G)})$ is true for any graph $G$ and any $\lambda_k$,
and the improved Cheeger's inequality in~\cite{improved-cheeger} follows.

\end{document}